\documentclass{llncs}
\pagestyle{plain}

\usepackage{amsmath}
\usepackage{amssymb}
\usepackage{bm}  

\usepackage[a4paper, total={7in, 9in}]{geometry}
\usepackage{graphicx}
\usepackage{float}
\usepackage{array}
\usepackage{booktabs}
\usepackage{longtable}
\usepackage{tabularx}

\usepackage{xcolor}
\usepackage{soul}

\usepackage{hyperref}

\usepackage[font=small,skip=12pt]{caption}
\usepackage{subcaption}

\usepackage{algorithm}
\usepackage{algpseudocodex}  

\usepackage{listings}

\usepackage{tikz}
\usetikzlibrary{shapes.geometric, arrows}
\usepackage{academicons}
\usepackage{enumerate}
\usepackage{lineno}

\pagestyle{empty}
\setcounter{tocdepth}{4}
\setcounter{secnumdepth}{4}
\setlength{\textfloatsep}{11pt plus 3.0pt minus 3.0pt}

\lstset{mathescape,
  frame=tb,
  language=Python,
  aboveskip=1mm,
  belowskip=1mm,
  showstringspaces=false,
  columns=flexible,
  basicstyle={\small\ttfamily},
  numbers=left,
  breaklines=true,
  breakatwhitespace=true,
  tabsize=3
}

\makeatletter
\def\algbackskip{\hskip-\ALG@thistlm}
\makeatother

\AtBeginDocument{%
 \providecommand\BibTeX{{%
   \normalfont B\kern-0.5em{\scshape i\kern-0.25em b}\kern-0.8em\TeX}}}

\DeclareUnicodeCharacter{2212}{\ensuremath{-}}

\begin{document}
\title{Area-universality in Outerplanar Graphs}
\author{Ravi Suthar, Raveena, Krishnendra Shekhawat}
\institute{Department of Mathematics, Birla Institute of Technology and Science, Pilani, Pilani Campus, Vidya Vihar, Pilani, Rajasthan 333031, India}

\maketitle
 \begin{abstract}
A rectangular floorplan is a partition of a rectangle into smaller rectangles such that no four rectangles meet at a single point. Rectangular floorplans arise naturally in a variety of applications, including VLSI design, architectural layout, and cartography, where efficient and flexible spatial subdivisions are required. A central concept in this domain is that of area-universality: a floorplan (or more generally, a rectangular layout) is area-universal if, for any assignment of target areas to its constituent rectangles, there exists a combinatorially equivalent layout that realizes these areas.

In this paper, we investigate the structural conditions under which an outerplanar graph admits an area-universal rectangular layout. We establish a necessary and sufficient condition for area-universality in this setting, thereby providing a complete characterization of admissible outerplanar graphs. Furthermore, we present an algorithmic construction that guarantees that the resulting layout is always area-universal.
\end{abstract}


\keywords{Graph Theory, Graph Algorithms, Complex Triangle, Outerplanar Graph, Rectangular Floorplan, Area-universal layout}


\section{Introduction}
The rectangular layout is a fundamental concept that involves strategic arrangements of non-overlapping rectangles (modules) within a rectangular space. Additionally, it ensures that there are no intersecting joints formed by any four non-overlapping modules within the layout. The rectangular layout has broad application in cartography \cite{raisz1934rectangular}, architecture \cite{earl1979architectural}, VLSI circuit design \cite{yeap1995sliceable}, and graph drawings. This paper explores the idea of area-universal rectangular layouts using a graph-theoretic approach. The notion of area-universal layouts \cite{eppstein2009area} emerges as an essential instrument for generating diverse rectangular arrangements that accommodate specific area assignments while preserving the consistent adjacency configurations defined by the adjacency graphs. From a mathematical perspective, the area-universal layouts can be represented through graphs, wherein each room or module within the layout is represented as a vertex, and the adjacency (i.e., the physical connection or proximity between rooms) is denoted by an edge connecting the corresponding vertices. The central challenge lies in determining whether the graph that captures these adjacency constraints can accommodate every possible assignment of areas to its rectangles while preserving the adjacencies and connectivity among the modules. Area-universal rectangular layouts of graphs serve a vital role in various fields \cite{mumford2008drawing,bruls2000squarified}, particularly in VLSI design, where the rectangles represent circuit components and their common boundaries model adjacency requirements. During the early stages of VLSI design, when chip component areas are not yet specified, only the relative positions of the components matter. However, later design stages require specific component areas, which an area-universal layout accommodates, simplifying the design process. Past works \cite{van2007rectangular,wimer2002floorplans} have presented heuristic algorithms for computing rectangular layouts based on given area assignments.


\subsection{Preliminary}
A graph $\mathcal{G} = (V, E)$ is a fundamental combinatorial structure consisting of a finite set of vertices $V$ and a set of edges $E$, where each edge connects a pair of distinct vertices in $V$.
A graph $\mathcal{G}$ is \emph{biconnected} if, for every pair of distinct vertices $u, v \in V(\mathcal{G})$, there exist at least two vertex-disjoint paths between $u$ and $v$. A graph is called planar if it can be drawn on the plane in such a way that its edges intersect only at their shared endpoints, thereby allowing a crossing-free representation. Such a drawing of a planar graph in the plane without edge crossings is referred to as a plane graph or a plane embedding. In this paper, the \emph{order} of a graph $\mathcal{G}$ denotes $|V|$, the number of vertices of $\mathcal{G}$.

\begin{definition}
In a plane graph, a \emph{separating triangle}~\cite{sun1993edge} is a cycle of 
three edges $(u,v,w)$ such that at least one vertex of the graph lies in the 
interior region bounded by this cycle (see Figure \ref{ST}). 
\end{definition}

\begin{definition}
        A \textit{properly triangulated plane graph} (PTPG)  \cite{kozminski1985rectangular} is a connected plane graph which satisfies the following properties: (see Figure \ref{ptpg})
        \begin{enumerate}
             \item Exterior face has length $ \geq 4$,
             \item Every face (except the exterior) is a triangle (bounded by three edges),
             \item It does not contain a separating triangle (ST).
        \end{enumerate}
        \begin{figure}
         \centering
         \begin{subfigure}[b]{0.2\textwidth}
             \centering
             \includegraphics[width=\textwidth]{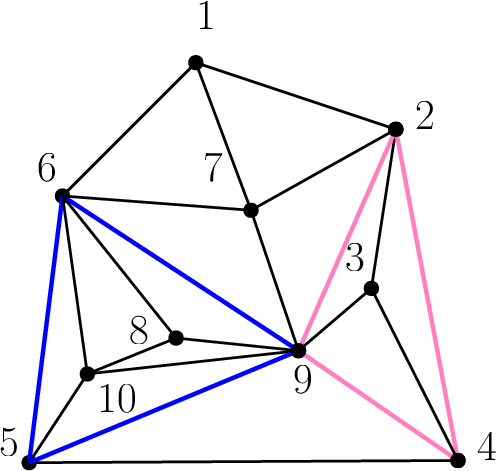}
             \caption{ }
             \label{ST}
         \end{subfigure}
         \hspace{1cm}
         \begin{subfigure}[b]{0.23\textwidth}
             \centering
             \includegraphics[width=\textwidth]{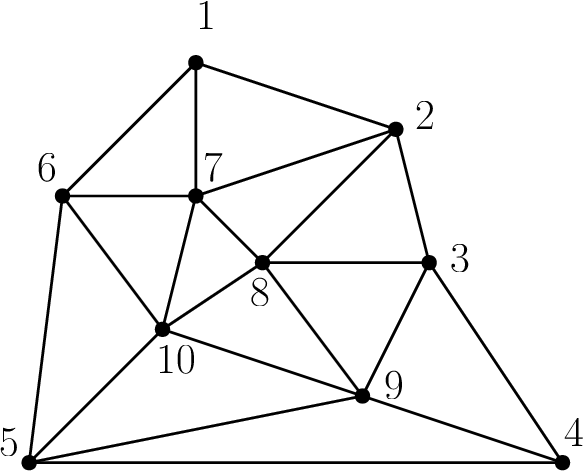}
             \caption{ }
             \label{ptpg}
         \end{subfigure}
         \hspace{1cm}
         \begin{subfigure}[b]{0.23\textwidth}
             \centering
             \includegraphics[width=\textwidth]{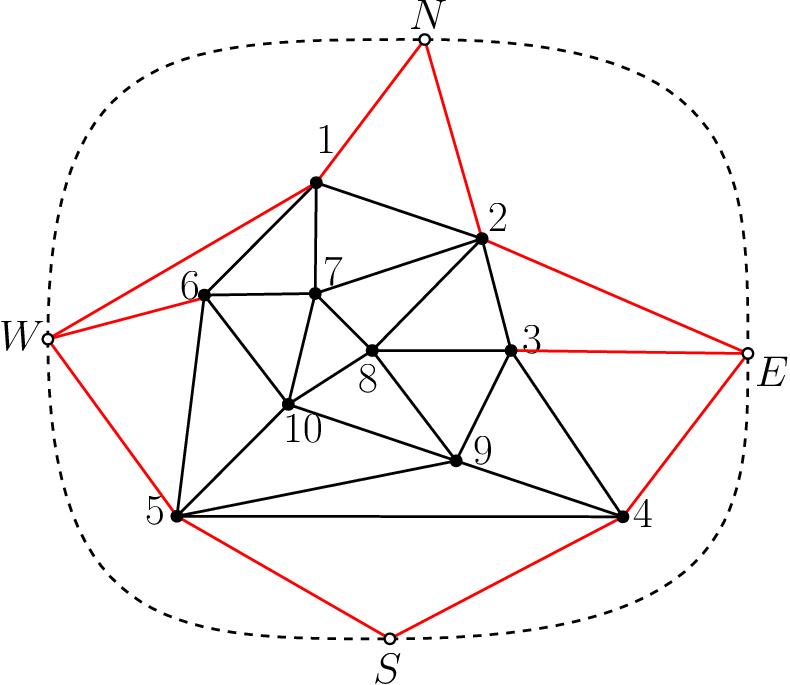}
             \caption{ }
             \label{4ptpg}
         \end{subfigure}
        
            \caption{(a) Presence of separating triangles $(2,4,9)$, $(6,9,10)$, and $(6,9,5)$. 
(b) A properly triangulated plane graph (PTPG). 
(c) An \emph{extended graph} of a PTPG.}
            \label{STPTPG}
\end{figure}
\end{definition}

\begin{definition}\label{def_outerplanar}
    A graph $\mathcal{G}$ is said to be \emph{outerplanar} if it admits a plane embedding in which 
    all vertices lie on the boundary of the exterior face.
\end{definition}
Let $\mathcal{G}$ be a plane triangulated graph. If $\mathcal{G}$ can be augmented by adding four external vertices, called the \emph{cardinal vertices} and denoted by $N, W, S,$ and $E$, placed on the outer face in clockwise order (see Figure \ref{4ptpg}), such that the resulting graph $E(\mathcal{G})$ satisfies the following properties:
(i) every interior face of $E(\mathcal{G})$ is a triangle and the exterior face is a quadrilateral; and
(ii) $E(\mathcal{G})$ contains no separating triangles, then we call $E(\mathcal{G})$ an \emph{extended graph} of $\mathcal{G}$. A plane triangulated graph $\mathcal{G}$ that admits such an augmentation is called a \emph{proper graph}. If, in addition, $\mathcal{G}$ is outerplanar, then the corresponding extended graph $E(\mathcal{G})$ is called an \emph{extended outerplanar graph} (see Figure~\ref{extended}).

\begin{definition}
    Regular edge labeling [REL] \cite{kant1997regular}: A regular edge labeling for a biconnected PTPG $\mathcal{G}$ with four outer vertices $N$, $W$, $S$, $E$ (ordered counterclockwise), constitutes a partition and orientation of its interior edges into two disjoint subsets, $T_1$ (represented using blue color) and $T_2$ (represented using red color), satisfying the following conditions:
    \begin{itemize}
        \item[(I)] For every interior vertex $v$, the incident edges are arranged counterclockwise around $v$ in the sequence:
        \begin{itemize}
            \item Directed toward $v$: incoming $T_1$ edges.
            \item Directed away from $v$: outgoing $T_2$ edges
            \item Directed away from $v$: outgoing $T_1$ edges
            \item Directed toward $v$: incoming $T_2$ edges
        \end{itemize}

        \item[(ii)] Edges incident to vertex N are included in $T_1$ and are directed toward N. Conversely, edges incident to W are part of $T_2$ and are directed away from W. Edges incident to S are contained in $T_1$ and are directed away from S, while edges incident to E lie in the set $T_2$ and are directed toward E.
    \end{itemize}
\end{definition}

\begin{definition}
     A flippable edge~\cite{eppstein2012area} in a REL is an edge $e$ that is not incident to any degree-four vertex, and must be a diagonal of a four-cycle whose edges are alternately labeled in the REL. Furthermore, if an edge is flippable, its label can be changed without affecting the rest of the labeling. In Figure \ref{flip}, edges $(5, 7)$ and $(4, 8)$ are flippable, and the REL obtained after flipping these edges is shown in Figure \ref{flip2}.
    \begin{figure}
         \centering
         \begin{subfigure}[b]{0.28\textwidth}
             \centering
             \includegraphics[width=\textwidth]{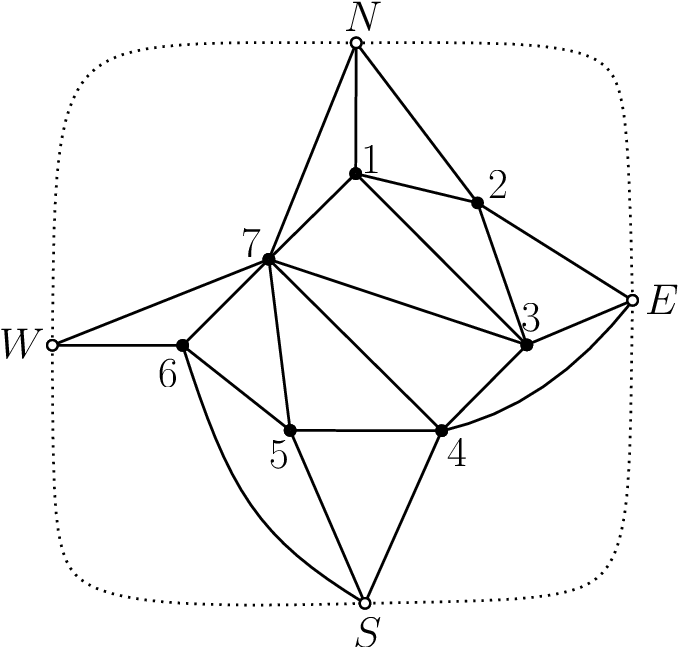}
             \caption{ }
             \label{extended}
         \end{subfigure}
         \hspace{0.5cm}
         \begin{subfigure}[b]{0.3\textwidth}
             \centering
             \includegraphics[width=\textwidth]{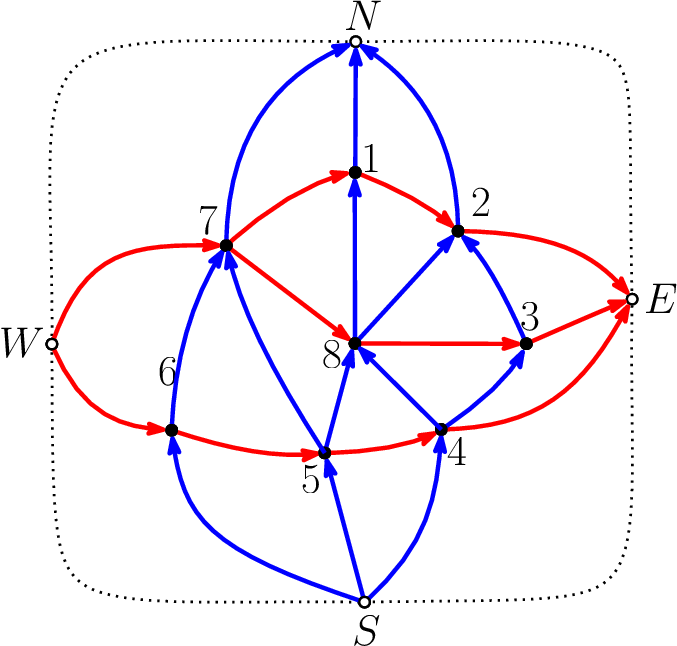}
             \caption{ }
             \label{flip}
         \end{subfigure}
         \hspace{0.5cm}
         \begin{subfigure}[b]{0.3\textwidth}
             \centering
             \includegraphics[width=\textwidth]{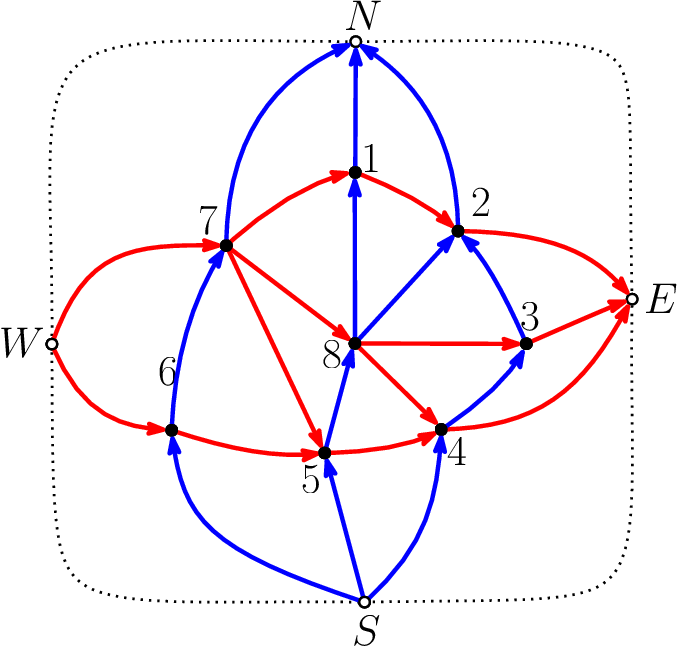}
             \caption{ }
             \label{flip2}
         \end{subfigure}
        
            \caption{(a) An extended outerplanar graph. 
            (b) Presence of flippable edges $(5,7)$ and $(4,8)$. 
            (c) Altered labels of the flippable edges $(5,7)$ and $(4,8)$.}
            \label{Flippablr}
\end{figure}
\end{definition}
\begin{definition}
    Floor plan $(\mathcal{F})$ \cite{rinsma1988existence}: A floor plan (layout) decomposes a polygon into smaller component polygons via straight-line segments. The outer polygon is called boundary of the floor plan, while the smaller component polygons are termed modules. Two modules are adjacent if they share a wall segment; mere point contact (four joint where four line segments meet) does not constitute adjacency. A special class of floor plans is the rectangular floor plan (RFP), in which the boundary and all modules are rectangular.
\end{definition}

\begin{definition}

A rectangular layout $\mathcal{F}$ is said to be \emph{area-universal}~\cite{eppstein2009area} if for every assignment of positive areas to the rectangles of $\mathcal{F}$, there exists a drawing of $\mathcal{F}$ in the plane in which each rectangle realizes its assigned area. In such a drawing, only the sizes of the rectangles may change; the combinatorial structure of $\mathcal{F}$ (i.e., the adjacencies between rectangles) must remain exactly the same as in the original layout. 
\end{definition}

\subsection{Motivation and Contributions}
A significant amount of research has focused on the construction of rectangular floor plans. In 1985, Kozminski and Kinnen \cite{kozminski1985rectangular} showed that a PTPG admits a rectangular floor plan if and only if it can be augmented with four external vertices such that the resulting \emph{extended graph} satisfies the following two properties: (i) every interior face is a triangle and the exterior face is a quadrilateral; and (ii) the extended graph contains no separating triangles. Later, Kant and He~\cite{kant1997regular} introduced the concept of \emph{regular edge labeling} and proposed two linear-time algorithms for computing such labelings. They proved that for any extended graph $E(\mathcal{G})$, a regular edge labeling can be found in linear time, and that the corresponding rectangular floor plan defined by this labeling can also be constructed in linear time. Regular edge labelings have also been investigated by Fusy [\cite{fusy2005transversal}, \cite{fusy2009transversal}], who introduced the term transversal structures for this concept. These labelings are strongly connected to other edge-coloring frameworks on planar graphs, which are useful for representing straight-line drawings and orthogonal polyhedral structures \cite{eppstein2010steinitz}.

A plane graph may admit multiple extended graphs and multiple rectangular layouts with respect to these extended graphs, not all of which are area-universal. Rinsma \cite{rinsma1987nonexistence}
demonstrated explicit examples where specific outerplanar graphs with given area assignments could not be realized by any rectangular layout. Motivated by architectural floor plans, where only a subset of room adjacencies may be specified, Rinsma~\cite{rinsma1988rectangular} considered the following related problem: given a tree $T$, does there exist a rectangular layout $L$ such that $T$ is a spanning tree of the dual graph of $L$? She proved that such a layout always exists. However, the layouts produced by her algorithm are not necessarily area-universal.

To address the open problems posed by Rinsma, Eppstein et al.~\cite{eppstein2012area} established a geometric characterization that is both necessary and sufficient for a rectangular layout to be area-universal (see Theorem~\ref{thm:area-universal-characterization}). In addition, they modified Rinsma’s construction to produce area-universal layouts and proved that for every tree $T$, there exists an area-universal rectangular layout $\mathcal{F}$ such that $T$ is a spanning tree of the dual graph of $\mathcal{F}$.

\noindent In a rectangular layout $\mathcal{F}$, a line segment is defined as a sequence of consecutive inner edges of $\mathcal{F}$. A line segment is said to be maximal if it is not contained within any other line segment. Moreover, a maximal line segment is one-sided if it is a side of at least one rectangle in a rectangular layout.
\begin{theorem}
    \label{thm:area-universal-characterization}
    A rectangular layout $\mathcal{F}$ is said to be area-universal if and only if every internal maximal line segment of $\mathcal{F}$ is one-sided. \cite{eppstein2012area}
\end{theorem}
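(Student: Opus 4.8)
The plan is to treat the set of all layouts combinatorially equivalent to $\mathcal{F}$ as a parameter space and to study the \emph{area map} on it. Fix the bounding box to be the unit square. A layout equivalent to $\mathcal{F}$ is determined by the coordinates of its internal maximal segments: each horizontal maximal segment contributes a $y$-coordinate and each vertical maximal segment an $x$-coordinate, and these coordinates range over an open region $P$ cut out by the strict linear inequalities that preserve the adjacency pattern of $\mathcal{F}$ (each pair of comparable segments keeps its order, and each rectangle keeps positive width and height). Since these constraints are linear, $P$ is convex, hence connected and simply connected. A standard count via Euler's formula for the planar subdivision, using that no four rectangles meet at a point, shows that a layout with $n$ rectangles has exactly $n-1$ internal maximal segments, so $\dim P = n-1$. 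On the target side, normalising by the fixed total area leaves an $(n-1)$-dimensional open simplex $\Delta$ of admissible area vectors. I would then define $\Phi\colon P\to\Delta$ sending a configuration to its vector of rectangle areas; area-universality of $\mathcal{F}$ is precisely the surjectivity of $\Phi$.

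For the sufficiency direction I would argue that, when every internal maximal segment is one-sided, $\Phi$ is a homeomorphism onto $\Delta$. The two ingredients are properness and local invertibility. Properness is geometric: the coordinates are bounded, and as a configuration approaches $\partial P$ some segment coordinate collides with a neighbouring one, forcing some rectangle's width or height (hence its area) to zero, so the image approaches $\partial\Delta$; thus preimages of compact sets are compact. For local invertibility I would compute the Jacobian of $\Phi$: the column indexed by a horizontal segment $s$ has entries $+w_R$ for each rectangle $R$ directly below $s$ with its top side on $s$ and $-w_R$ for each such rectangle above, where $w_R$ is the width of $R$, and symmetrically $\pm h_R$ for vertical segments. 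Granting that this Jacobian is nonsingular throughout $P$, the map $\Phi$ is a proper local homeomorphism, hence an open map, hence a covering map onto its image; since $\Phi$ is open and proper, its image is clopen in the connected set $\Delta$ and therefore equals $\Delta$; and a covering of the simply connected $\Delta$ by the connected space $P$ must be a single sheet. Thus $\Phi$ is onto and $\mathcal{F}$ is area-universal.

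For necessity I would contrapositively exhibit an unrealisable area assignment whenever some internal maximal segment $s$ is two-sided. Two-sidedness means $s$ is subdivided from above by at least one interior vertical segment and from below by at least one interior vertical segment, and these subdivisions can be chosen to interleave, producing a \emph{misaligned} quadruple of rectangles straddling $s$: two above split at one $x$-value and two below split at a different $x$-value. The single slide parameter of $s$ is the only freedom coupling the two sides, so the widths demanded on the top of $s$ and on the bottom of $s$ cannot be prescribed independently once their common length is fixed. Tracking the areas of the straddling rectangles as functions of this one parameter yields a nontrivial relation that all realisable area vectors near the configuration must satisfy; I would turn this into a concrete target assignment that violates the relation, in the spirit of Rinsma's explicit examples~\cite{rinsma1987nonexistence}, certifying that $\Phi$ misses an open subset of $\Delta$.

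The main obstacle is the combinatorial heart of the sufficiency argument: proving that one-sidedness forces the Jacobian to be nonsingular at \emph{every} point of $P$, not merely at one symmetric configuration. The natural idea is that each one-sided segment $s$ carries a single rectangle $R_s$ spanning one of its full sides, whose area responds to the slide of $s$ through the entire length $w_{R_s}$ (or $h_{R_s}$), so matching $s$ to $R_s$ should triangularise the Jacobian with nonzero diagonal. The danger is that the spanning rectangle supplied by one-sidedness may be shared among several segments, so the elimination order must be selected coherently. I expect to control this by inducting on $n$: delete the spanning rectangle $R_s$ of a carefully chosen one-sided segment, verify that the reduced layout is again one-sided, and use the induction both to obtain the elimination order and to reduce the nonsingularity claim to the smaller layout.
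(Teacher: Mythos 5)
A preliminary remark: the paper you were asked to match does not actually prove Theorem~\ref{thm:area-universal-characterization}; it imports it from \cite{eppstein2012area} and works instead with the flippable-edge reformulation (Lemma~\ref{lemmaproof}). So your attempt can only be judged on its own merits, and on those merits it has a genuine gap.

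Your framework is sensible: the realization space $P$ of segment coordinates is indeed an open convex polytope of dimension $n-1$, area-universality is surjectivity of the area map $\Phi\colon P\to\Delta$, and the topological skeleton (a proper local homeomorphism onto the connected, simply connected $\Delta$ is a homeomorphism, hence surjective) is sound. The decisive error is in the properness step, which is precisely where one-sidedness must enter and where you never use it. You assert that approaching $\partial P$ forces some rectangle's width or height to zero. This is false for general layouts: $\partial P$ also contains \emph{alignment} degenerations, where a T-junction on one side of a two-sided segment meets a T-junction on the other side (forming a four-way crossing) while every area stays bounded away from zero. Concretely, take four rectangles: the top half of the square split by a vertical line at $x=a$, the bottom half split at $x=b$, with the middle horizontal segment two-sided and the adjacency structure forcing $a<b$. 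Then $P=\{(a,b,h): 0<a<b<1,\ 0<h<1\}$ is convex and bounded, and the Jacobian of $\Phi(a,b,h)=\bigl(a(1-h),\,(1-a)(1-h),\,bh,\,(1-b)h\bigr)$ has full rank everywhere on $P$; indeed $\Phi$ is a diffeomorphism onto the set $\{w\in\Delta : w_1/(w_1+w_2)<w_3/(w_3+w_4)\}$, which is a proper open subset of $\Delta$ (the assignment $(0.4,0.1,0.1,0.4)$ is unrealizable). As $a\to b$ the configuration reaches $\partial P$ while all four areas tend to interior values, so properness fails. Since your sufficiency argument invokes one-sidedness only in the Jacobian step, and the Jacobian condition holds in this two-sided example, your proof as structured would certify this non-area-universal layout as area-universal. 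The repair is to notice that one-sidedness is exactly what excludes alignment degenerations: if every maximal segment is the full side of some rectangle, then every adjacency-overlap constraint defining $P$ coincides with a width/height positivity constraint, so approaching $\partial P$ really does force an area to zero.

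The other two ingredients are also not in place. For the Jacobian claim, which you rightly flag as the main obstacle, your proposed induction is ill-formed: deleting the spanning rectangle of a one-sided segment does not leave a rectangular layout (it leaves a rectangular or L-shaped hole), so "the reduced layout is again one-sided" has no meaning; moreover, the example above shows nonsingularity of the Jacobian is not the property that separates one-sided from two-sided layouts, so an induction driven by one-sidedness is aimed at the wrong target. For necessity, describing the obstruction as "a nontrivial relation that all realisable area vectors must satisfy" is incorrect: since $\Phi$ is an open map, the realizable vectors fill an open set and satisfy no equation; the obstruction is an \emph{inequality} (the forced ordering of the interleaved T-junctions, the condition $a<b$ above), and the real work is to choose target weights, including weights for all rectangles away from the two-sided segment, that force this ordering to reverse in any equivalent realization. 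That is what the Rinsma-style constructions you allude to actually do, and your sketch does not carry it out.
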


\noindent The rectangular layout depicted in Figure~\ref{ae} is not area-universal, since the maximal internal line segment $o$ (highlighted in bold) does not align with the side of any rectangle. In contrast, the rectangular layout shown in Figure~\ref{be} is area-universal, as every maximal line segment aligns with the side of some rectangle.
\begin{figure}
         \centering
         \begin{subfigure}[b]{0.18\textwidth}
             \centering
             \includegraphics[width=\textwidth]{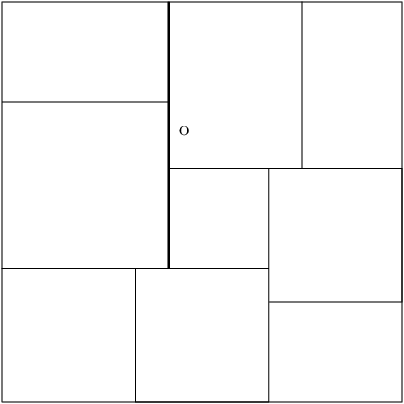}
             \caption{ }
             \label{ae}
         \end{subfigure}
         \hspace{1cm}
         \begin{subfigure}[b]{0.18\textwidth}
             \centering
             \includegraphics[width=\textwidth]{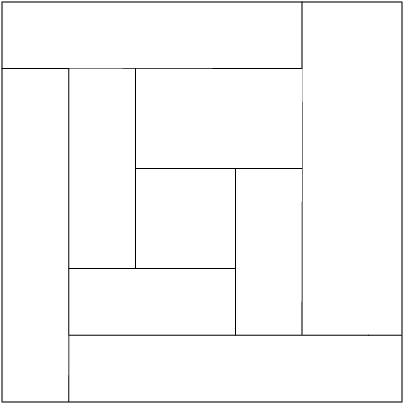}
             \caption{ }
             \label{be}
         \end{subfigure}
        
            \caption{(a) A rectangular layout that is not area-universal. 
            (b) An area-universal rectangular layout.}
            \label{graph1}
\end{figure}

\noindent Eppstein et al. ~\cite{eppstein2012area} proposed an algorithm to construct area-universal rectangular layouts. However, these
algorithms are not fully polynomial, but are fixed-parameter tractable for a parameter
related to the number of separating four-cycles in $\mathcal{G}$. As a result, the problem remains computationally difficult 
in the general case. Later studies have therefore emphasized restricted graph classes. 
For instance, Kumar et al. \cite{kumar2021transformations} 
identified a family of rectangularly dualizable graphs (RDGs) for which every RDG can be realized as an area-universal rectangular layout in polynomial time. 

Outerplanar graphs, which have all their vertices on the boundary of a single face when embedded in the plane, are fascinating due to their structural properties. In this paper, we determine graph-theoretic properties of outerplanar graphs that admit area-universal rectangular layouts. In Section \ref{necessarysection}, we establish necessary and sufficient conditions for biconnected outerplanar graphs to admit such layouts. In Section \ref{Highercase}, we present constructive methods for generating area-universal rectangular layouts and describe several sufficient construction techniques. Furthermore, in Section \ref{lowercase}, we outline an additional construction that enables the generation of all possible area-universal rectangular layouts corresponding to a given biconnected outerplanar graph. Hence, we enumerate all area-universal layouts that can be constructed while satisfying the prescribed adjacency constraints.

\section{Necessary and Sufficient Condition}\label{necessarysection}

We endeavour to establish a comprehensive understanding of outerplanar graphs
that can accommodate area-universal layouts while also identifying those that
cannot. The illustration presented in Figure \ref{outer} showcases the smallest graph known to lack an area-universal layout. It has been established in \cite{eppstein2009area} that a layout qualifies as area-universal if and only if it adheres to the criterion of being one-sided. An area-universal layout is characterized by the property that every maximal line segment serves as a side of at least one rectangle in the layout. Theorem \ref{t1} outlines the essential requirements
needed to determine whether an outerplanar graph can have an area-universal layout, which is also known as a one-sided layout. Eppstein et al. \cite{eppstein2012area} showed that if a regular edge labeling contains a flippable edge, then the corresponding rectangular layout has at least one maximal line segment that is not one-sided. The following lemma will be used throughout the paper.

\begin{lemma}\label{lemmaproof}\cite{eppstein2012area}
Let $\mathcal{G}$ be a plane graph and let $E(\mathcal{G})$ denote its extended graph. Let $r(\mathcal{G})$ be a regular edge labeling of $E(\mathcal{G})$, and let $\mathcal{F}$ be the rectangular floor plan corresponding to $r(\mathcal{G})$. Then $\mathcal{F}$ is area-universal if and only if $r(\mathcal{G})$ contains no flippable edge.
 
\end{lemma}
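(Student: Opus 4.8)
The plan is to prove the lemma by combining Theorem~\ref{thm:area-universal-characterization} with a local dictionary between the regular edge labeling $r(\mathcal{G})$ and the wall structure of $\mathcal{F}$. By Theorem~\ref{thm:area-universal-characterization}, $\mathcal{F}$ is area-universal if and only if every internal maximal line segment of $\mathcal{F}$ is one-sided, so it suffices to establish the equivalence: some internal maximal line segment of $\mathcal{F}$ fails to be one-sided if and only if $r(\mathcal{G})$ contains a flippable edge. The bridge between the two sides is the standard duality behind the REL-to-floorplan construction, which I would fix first, since every later argument is phrased in it: each interior vertex of $E(\mathcal{G})$ becomes a rectangle of $\mathcal{F}$, each $T_1$ edge records a stacked (top--bottom) adjacency across a horizontal wall, each $T_2$ edge records a left--right adjacency across a vertical wall, and a maximal line segment of $\mathcal{F}$ is exactly a maximal wall whose two ends and two sides are governed by the orientations prescribed in rules (I) and (ii).

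For the forward direction (a flippable edge forces a non-one-sided segment), I would invoke the result of Eppstein et al.\ already quoted just above the lemma. By definition a flippable edge $e$ is the diagonal of a four-cycle whose boundary edges alternate between $T_1$ and $T_2$. Reading this four-cycle through the dictionary, the four incident rectangles surround a single maximal wall $s$ met by T-junctions from both sides, and the alternation of labels forces these two T-junctions to sit at two \emph{distinct} interior points of $s$ (they cannot coincide, as that would place four rectangles at a common point). Hence neither side of $s$ is spanned by a single rectangle, so $s$ is not one-sided and $\mathcal{F}$ is not area-universal. Since this direction is essentially the cited statement, I would keep it brief and devote the bulk of the proof to the converse.

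The converse is the heart of the argument: if some internal maximal line segment is not one-sided, I must exhibit a flippable edge. Take a non-one-sided maximal segment $s$, say vertical. Non-one-sidedness means a horizontal wall abuts $s$ at an interior point from its left side while a second horizontal wall abuts $s$ at a \emph{different} interior point from its right side (coincident heights are again impossible). This exposes an upper-left rectangle $A$, a lower-left rectangle $B$, an upper-right rectangle $C$, and a lower-right rectangle $D$, with $A$--$B$ and $C$--$D$ stacked pairs and the offset producing three adjacencies across $s$. Translating the five adjacencies $AB,\ CD,\ AC,\ BD,\ BC$ back into $E(\mathcal{G})$, the first four close into a four-cycle $A$--$C$--$D$--$B$ whose edges alternate $T_2,T_1,T_2,T_1$, while the middle overlap contributes the diagonal $e=BC$. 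I would then confirm that $e$ is a genuine interior edge and that neither endpoint is a degree-four vertex: the offset forces $B$ to have two right-neighbors ($C$ and $D$) and $C$ to have two left-neighbors ($A$ and $B$), each a doubled adjacency of the same color on one side, which is incompatible with the single-edge-per-quadrant pattern that rule (I) imposes at a degree-four vertex. With both defining conditions met, $e$ is flippable.

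I expect the main obstacle to be this converse, and within it the careful translation from the geometric picture of a staggered wall into the precise combinatorial four-cycle with correctly alternating labels. Rules (I) and (ii) must be used to certify that the four boundary adjacencies genuinely alternate between $T_1$ and $T_2$, that the diagonal carries the label that can be changed, and that degenerate arrangements (in which the offset T-junctions fail to yield a chordless boundary four-cycle) do not arise; the degree-four exclusion, handled above by the doubled-adjacency observation, is the one place where rule (I) is indispensable. Once the REL-to-floorplan dictionary and these local checks are in place, the desired equivalence, and hence the lemma via Theorem~\ref{thm:area-universal-characterization}, follows.
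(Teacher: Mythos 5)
First, a point of comparison: the paper does not actually prove Lemma~\ref{lemmaproof}; it is imported verbatim from Eppstein et al.\ \cite{eppstein2012area}, so there is no internal proof to measure your attempt against. Judged on its own terms, your architecture is the right one, and it is essentially how the cited result is obtained: reduce via Theorem~\ref{thm:area-universal-characterization} to the equivalence ``some internal maximal segment of $\mathcal{F}$ is not one-sided $\Leftrightarrow$ $r(\mathcal{G})$ has a flippable edge,'' and translate through the standard dictionary ($T_1$ edges $=$ top--bottom adjacencies, $T_2$ edges $=$ left--right adjacencies). Your forward direction is fine (it is the fact the paper quotes just before the lemma), and your degree-four exclusion is sound: a vertex with two same-label edges on the same side has at least five neighbors, because an interior vertex of $E(\mathcal{G})$ must have all four groups of rule (I) nonempty, so a degree-four vertex has exactly one edge per group.

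The genuine gap is in the converse, in how the two T-junctions on the non-one-sided segment $s$ are chosen. You take an arbitrary junction on the left of $s$ and an arbitrary junction on the right and assert that the four surrounding rectangles $A,B,C,D$ realize the five adjacencies $AB$, $CD$, $AC$, $BD$, $BC$. For arbitrary junctions this fails: if further junctions lie between the two chosen heights, the rectangle $C$ above the right junction may terminate at another right junction below the left junction, killing $A$--$C$; symmetrically $B$--$D$ can fail; even the intended diagonal $B$--$C$ can fail to be an edge at all (left junctions at heights $6$ and $10$, right junctions at heights $1$ and $3$: then $B$ meets $s$ along $[6,10]$ and $C$ along $[1,3]$, so $B$ and $C$ are not adjacent and your four-cycle does not exist). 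The missing step is a selection argument: order \emph{all} T-junctions on $s$ by height; since both sides contribute at least one, there exist two junctions that are consecutive in this order and lie on opposite sides of $s$. Choosing those two --- so that no junction of either side lies strictly between them --- makes all five adjacencies hold, gives the alternation $T_2,T_1,T_2,T_1$ around $A$--$C$--$D$--$B$ with $B$--$C$ as diagonal (indeed the four-cycle is then exactly the union of the two triangular faces on either side of $B$--$C$), and supports your doubled-adjacency argument excluding degree-four endpoints. With that one repair, your proof goes through.
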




\begin{figure}[H]
   \centering
    \includegraphics[width=0.17\textwidth]{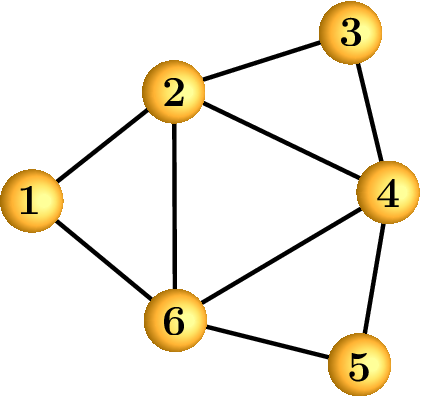}
    \caption{The smallest outerplanar graph G not possessing any area-universal layout}
   \label{outer}
 \end{figure}

\begin{theorem}
    \label{t1}
    
    Let $\mathcal{G}$ be a biconnected outerplanar triangulated graph of order $n>3$. Then there exists at least one extended graph $E(\mathcal{G})$ of $\mathcal{G}$ that admits an area-universal rectangular layout if and only if $\mathcal{G}$ contains exactly two vertices of degree two.

\end{theorem}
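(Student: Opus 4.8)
The plan is to route everything through the weak dual tree of $\mathcal{G}$ and the flippable-edge criterion of Lemma~\ref{lemmaproof}. Since $\mathcal{G}$ is biconnected, outerplanar and triangulated, it is a triangulated polygon (a maximal outerplanar graph), so its weak dual $T$ is a tree whose nodes are the interior triangular faces and whose edges correspond to the interior chords. The first step is to record the bijection between the degree-two vertices of $\mathcal{G}$ and the leaves of $T$: a vertex has degree two exactly when it is the apex of an \emph{ear}, and ears are precisely the leaf faces of $T$. Because every node of $T$ has degree at most three (a triangle has three sides), the handshake identity for trees gives $\#\text{leaves} = 2 + \#\{\text{degree-three nodes}\}$, so $T$ has $\geq 3$ leaves iff it contains a node of degree three, i.e.\ a triangular face all of whose three edges are interior chords. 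Call such a face a \emph{complex triangle}. Hence $\mathcal{G}$ has exactly two degree-two vertices $\iff$ $T$ is a path $\iff$ $\mathcal{G}$ contains no complex triangle. This reduces the theorem to the statement that some extended graph of $\mathcal{G}$ admits an area-universal layout iff $\mathcal{G}$ is free of complex triangles.

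For necessity I would argue the contrapositive. Suppose $\mathcal{G}$ contains a complex triangle $xyz$. Its three sides $xy$, $yz$, $zx$ are chords, hence each is shared with a neighbouring interior triangle, and since an augmentation to an extended graph only fills the region between the polygon boundary and the cardinal vertices, the interior faces of the polygon, including $xyz$ and its three neighbours, are left untouched. Fix any extended graph $E(\mathcal{G})$ and any REL $r$. Each side of $xyz$, say $xy$, is then the diagonal of the four-cycle formed by $xyz$ and the triangle glued along $xy$. I would show by a local case analysis on the admissible colour/orientation patterns the REL may assign around the face $xyz$ that at least one of these three four-cycles is alternately labelled and that its diagonal is not incident to a degree-four vertex; this diagonal is flippable. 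By Lemma~\ref{lemmaproof} the corresponding layout is not area-universal, and since the extended graph and REL were arbitrary, no area-universal layout exists.

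For sufficiency, assume $\mathcal{G}$ has exactly two degree-two vertices, so $T$ is a path and $\mathcal{G}$ is a path-triangulation with its triangles glued in a line. I would exhibit an explicit extended graph, attaching $N,W,S,E$ so that the two ears of $\mathcal{G}$ are absorbed at the two ends of the spine, together with a REL obtained by orienting the chords consistently along the spine (equivalently, the canonical Kant--He labeling of this extension). The verification is that no interior edge is flippable: since $T$ is a path, every interior chord splits the triangulation into two linearly ordered blocks, so no chord can be the diagonal of an alternately labelled separating four-cycle of the required type. Equivalently, one argues directly from Theorem~\ref{thm:area-universal-characterization} that every internal maximal line segment of the resulting layout is one-sided, by an induction that sweeps along the spine and shows each newly placed rectangle contributes a segment flush with the side of the previously placed one. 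Either route, through Lemma~\ref{lemmaproof}, yields an area-universal layout.

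The main obstacle I expect is the universally quantified necessity direction: a single graph may admit many extended graphs and each many RELs, so the flippable edge must be produced without any control over these choices. Concentrating the argument at the complex triangle is what makes this tractable, since the three chords bounding it remain interior, with a triangle on each side, in every extension. The delicate point is the degree-four exclusion in the definition of a flippable edge, which forces the case analysis to track the degrees of $x,y,z$ (each at least three in $\mathcal{G}$) across all augmentations and to verify that they cannot simultaneously be pinned to four while keeping all three surrounding four-cycles non-alternating. By comparison the sufficiency direction is routine once the path structure is exploited, the only care being the correct attachment of the cardinal vertices at the two ears.
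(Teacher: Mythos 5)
Your outline coincides, step for step, with the paper's own proof: the weak dual tree, the bijection between degree-2 vertices and leaves, the reduction of necessity to a degree-3 dual node (your ``complex triangle''), flippable edges via Lemma~\ref{lemmaproof}, and an explicit extension for sufficiency. The problem is that in both directions the verification you defer cannot be completed the way you describe, and in both cases the missing ingredient is the same: the argument is not local to $\mathcal{G}$, it must use how the augmentation attaches the boundary of $\mathcal{G}$ to the four cardinal vertices.

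For necessity, a ``local case analysis on the admissible colour/orientation patterns around the face $xyz$'' is not enough. Face- and vertex-local constraints do not determine the labels of the three triangles glued to $xyz$: for instance, with $(v_1,v_2),(v_3,v_2)\in T_2$ and $(v_3,v_1)\in T_1$, nothing local forbids the adjacent triangle along $(v_1,v_2)$ from carrying $(v_2,\alpha_1)\in T_2$, $(\alpha_1,v_1)\in T_1$, in which case the four-cycle $(v_2,\alpha_1,v_1,v_3)$ has labels $T_2,T_1,T_1,T_2$ and does not alternate; the analogous choice at $(v_3,v_2)$, together with the fact that the four-cycle at $(v_3,v_1)$ can never alternate in this pattern (its two fixed edges are both $T_2$), defuses all three candidate diagonals at once, consistently with all rotation conditions at $v_1,v_2,v_3$. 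What kills such configurations is global: in any extended graph a degree-2 vertex lying in the interior of a single boundary path creates a separating triangle, so every degree-2 vertex must be a corner shared by two paths; with at least three degree-2 vertices this forces $v_1,v_2,v_3$ onto three distinct boundary paths, which both restricts the triangle to the three mixed patterns and pins down on which path each $\alpha_i$ lies, hence which labels the edges $(\alpha_i,v_j)$ carry. This corner/path argument is the crux of the paper's necessity proof and is absent from your plan. (Your worry about the degree-four exclusion is fair, but the paper's proof silently ignores it too; it is not where the real difficulty sits.)

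For sufficiency, your justification --- ``since $T$ is a path, every interior chord splits the triangulation into two linearly ordered blocks, so no chord can be the diagonal of an alternately labelled four-cycle'' --- is a statement about $\mathcal{G}$ alone, and it is false at that generality. The paper's Remark (Figures~\ref{NAU} and~\ref{AU}) gives a triangulated outerplanar graph with exactly two degree-2 vertices, so $T$ is a path, admitting one extension that is area-universal and another that is not: in the extension where each of the four vertices touches exactly two cardinals, the interior chord is precisely the diagonal of the alternately labelled outer four-cycle, hence flippable. So flippability of a chord depends on the augmentation, not on the structure of $\mathcal{G}$, and any correct verification must use the specific attachment pattern. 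The paper does this in Theorem~\ref{sufficient} for the extension built by Algorithm~\ref{algo2}: a pivot vertex $v$ with $\deg(v)>2$ is joined to three consecutive cardinals, the induced boundary partition $P_1=\{v\},P_2,P_3,P_4$ is shown to be unique, adjacencies between $P_2$ and $P_4$ are shown impossible, and then every four-cycle meets the parts in a $2{+}2$ or $1{+}3$ pattern and therefore cannot be alternately labelled. You name roughly the right extension (ears at the corners), but the argument you sketch for it never uses that choice, which is exactly what does the work.
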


\begin{proof}

Considering $\mathcal{G}$ as a biconnected outerplanar triangulated graph implies that the minimum degree of $\mathcal{G}$ is at least two. We define its weak dual $T$ by introducing one vertex for each interior triangular face and joining two such vertices whenever the corresponding faces share an internal edge (see Figure \ref{labelthm}). It follows that $T$ is a tree. Each leaf of $T$ corresponds to a triangular face of $\mathcal{G}$ that shares exactly one internal edge, where the opposite boundary vertex in $\mathcal{G}$ has degree two. Thus, the degree-2 vertices of $\mathcal{G}$ are in one-to-one correspondence with the leaves of $T$. Moreover, since every tree with $|V(T)|>1$ has at least two leaves, $\mathcal{G}$ must contain at least two degree-2 vertices whenever $n>3$.

First, we assume that $\mathcal{G}$ has exactly two degrees-2 vertices. We aim to demonstrate the existence of an \emph{extended graph} that admits an area-universal rectangular layout. To this end, we present an algorithm (see Algorithm \ref{algo2}) that augments $\mathcal{G}$ to obtain an extended graph $E(\mathcal{G})$ such that every rectangular layout corresponding to any regular edge labeling of $E(\mathcal{G})$ is area-universal. By proving that this algorithm consistently generates an area-universal layout for any biconnected outerplanar graph having two degree-2 vertices, we establish the validity of this statement.

Conversely, assuming that an area-universal layout exists corresponding to an extended graph $E(\mathcal{G})$, then we need to show that the proper graph $\mathcal{G}$ has exactly two degree-2 vertices. Here, we will prove the contrapositive statement. For that, we assume that $\mathcal{G}$ has at least three (say $k (k > 2)$) vertices of degree two, then its weak dual $T$ has at least three leaves and therefore, $T$ must contain at least one vertex of degree three. In $T$, a vertex of degree three corresponds to a interior triangular face (a triangular face $\textit{F} = \{v_1, e_1, v_2, e_2, v_3, e_3, v_1\}$, each of whose three edges is shared with another triangular face) in $\mathcal{G}$ (see Figure \ref{labelthm}). Now consider a biconnected outerplanar graph $\mathcal{G}$ that contains an interior triangular face $(v_1, v_2, v_3)$, which corresponds to a degree-three vertex in the weak dual of $\mathcal{G}$. Let $E(\mathcal{G})$ be any extended graph of $\mathcal{G}$. Constructing $E(\mathcal{G})$ requires augmenting $\mathcal{G}$ by adding four cardinal vertices. This augmentation induces a partition of the outer boundary of $\mathcal{G}$ into four contiguous vertex paths $(P_1, P_2, P_3, P_4)$ that appear in clockwise order along the outer face. Each cardinal vertex is adjacent to all vertices in its corresponding path $P_i$, for $1 \leq i \leq 4$. The augmentation is performed such that the resulting graph remains triangulated and contains no separating triangles. For any partition of the outer boundary of $\mathcal{G}$ into four contiguous paths in the augmentation process, if a degree-2 vertex lies on exactly one boundary path, consequently it create a separating triangle. Thus, no single path can contain a degree-2 vertex, each degree-2 vertex is necessarily shared by at least two boundary paths. It follows that the vertices $v_1, v_2, v_3$ of the triangular face must lie on different paths, and consequently, the associated modules are positioned 
along different cardinal directions $(N, W, S, E)$. Without loss of generality, let $v_1, v_2$, and $v_3$ lie on the boundary paths $P_1, P_2$, and $P_3$, respectively. Consequently, the modules associated with $v_1, v_2$, and $v_3$ are positioned along the $N$, $E$, and $S$ cardinal directions, respectively. Since $v_1, v_2$, and $v_3$ are pairwise adjacent and lie on three distinct boundary paths, two of the corresponding modules appear collinear in the rectangular layout, while the third module is orthogonal to these two. In the REL, there are precisely three possible labeling patterns among $v_1, v_2$, and $v_3$ that realize this configuration (see Figure~\ref{labelthm}).

\begin{itemize}
    \item[(i)] $(v_1, v_2), (v_3, v_2) \in T_2$ and $(v_3, v_1) \in T_1$.
    \item[(ii)] $(v_3, v_1), (v_2, v_1) \in T_1$ and $(v_3, v_2) \in T_2$.
    \item[(iii)] $(v_3, v_1), (v_3, v_2) \in T_1$ and $(v_1, v_2) \in T_2$.
\end{itemize}
\begin{figure}
    \centering
    \includegraphics[width=0.6\linewidth]{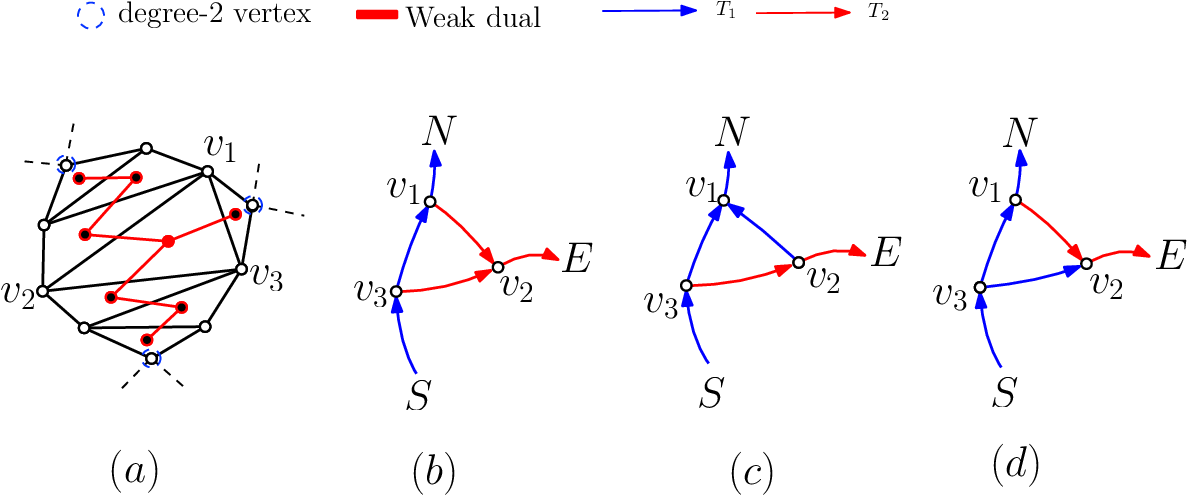}
    \caption{(a) Paths are separated by degree-2 vertices. (b-d) Three possible labeling patterns among $v_1, v_2,$ and $v_3$.}
    \label{labelthm}
\end{figure}
Let $\alpha_1 (\neq v_3)$ be a common neighbour of the vertices $v_1$ and $v_2$. 
Depending on the adjacency in the graph $\mathcal{G}$, either $\alpha_1$ lies on the path $P_1$ or on the path $P_2$. 
In both cases, the edge $(v_1,v_2)$ is the diagonal of the alternately labeled four-cycle $(v_2 - \alpha_1 - v_1 - v_3)$ in the REL, and hence $(v_1,v_2)$ is a flippable edge (see Figure~\ref{label2}). Moreover, in this configuration the edge $(v_3,v_2)$ is also flippable. Let $\alpha_2 (\neq v_1)$ be a common neighbour of the vertices $v_3$ and $v_2$. Depending on the adjacency in the graph, either $\alpha_2$ lies on the path $P_2$ or on the path $P_3$. In both scenarios, the edge $(v_3,v_2)$ is the 
diagonal of the alternately labeled four-cycle $(v_3 - \alpha_2 - v_2 - v_1)$, and is therefore flippable (see Figure~\ref{label2}).

Case~(ii) can be derived directly from Case~(i). In Case~(i), the edge $(v_1,v_2)$ belongs to $T_2$; after flipping its label so that $(v_2,v_1) \in T_1$, we obtain the configuration of Case~(ii), in which $(v_2,v_1)$ remains a flippable edge. Similarly, Case~(iii) can be derived from Case~(i) by starting with $(v_3,v_2) \in T_2$ and flipping its label so that $(v_3,v_2) \in T_1$, yielding the configuration of Case~(iii), where $(v_3,v_2)$ is still a flippable edge.

Therefore, there always exists a flippable edge among $(v_1,v_2)$, $(v_1,v_3)$, and $(v_2,v_3)$, where 
$v_i$, $1 \le i \le 3$, are the vertices of the triangular face corresponding to a degree-3 vertex of the weak dual, depending on the \emph{extended graph} $E(\mathcal{G})$ and the adjacencies in the outerplanar graph $\mathcal{G}$.

 \hfill $\square$

\begin{figure}
    \centering
    \includegraphics[width=0.8\linewidth]{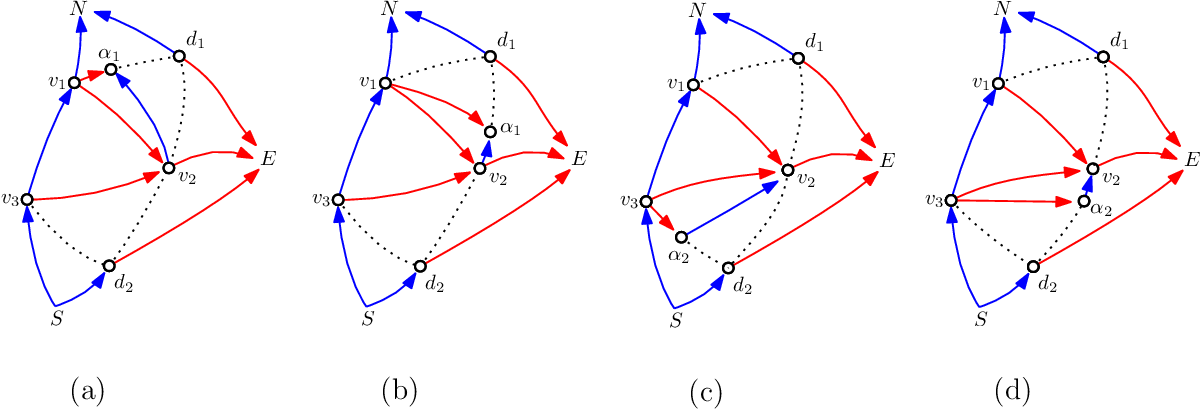}
    \caption{(a-b) edge $(v_1,v_2)$ is flippable, (c-d) edge $(v_3,v_2)$ is flippable.}
    \label{label2}
\end{figure}
\end{proof}

\textbf{Remark:} \noindent An outerplanar graph with exactly two vertices of degree two admits at least one extended graph that corresponds to an area-universal rectangular layout. However, not every extension of such an outerplanar graph yields an area-universal layout. As illustrated in Figure~\ref{Different4comp}, the same outerplanar graph $\mathcal{G}$ can be augmented in two different ways to obtain extended graphs $E(\mathcal{G})$. One of these augmentations produces an area-universal layout (see Figure~\ref{AU}), whereas the other does not (see Figure~\ref{NAU}). This observation highlights the importance of carefully selecting the augmentation. Consequently, it is essential to characterize structural properties of the \emph{extended} outerplanar graph. Theorem~\ref{necessary} identifies the specific types of augmentations that must be avoided in order to obtain an extended graph admitting an area-universal rectangular layout.

\begin{figure}
         \centering
         \begin{subfigure}[b]{0.2\textwidth}
             \centering
             \includegraphics[width=\textwidth]{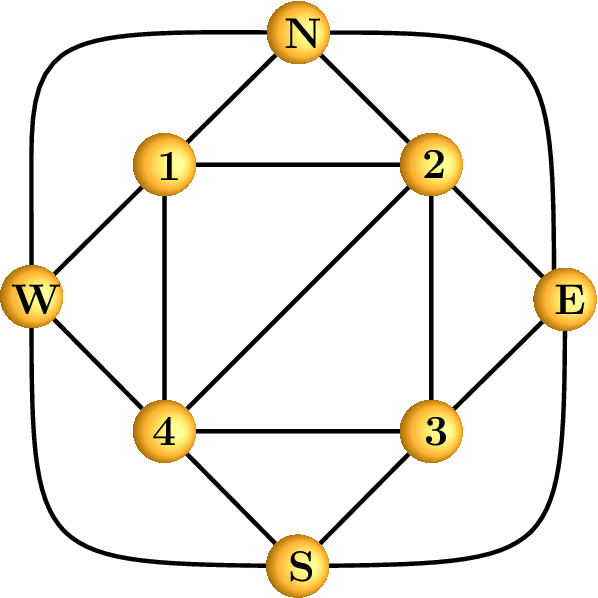}
             \caption{ }
             \label{NAU}
         \end{subfigure}
         \hspace{1cm}
         \begin{subfigure}[b]{0.2\textwidth}
             \centering
             \includegraphics[width=\textwidth]{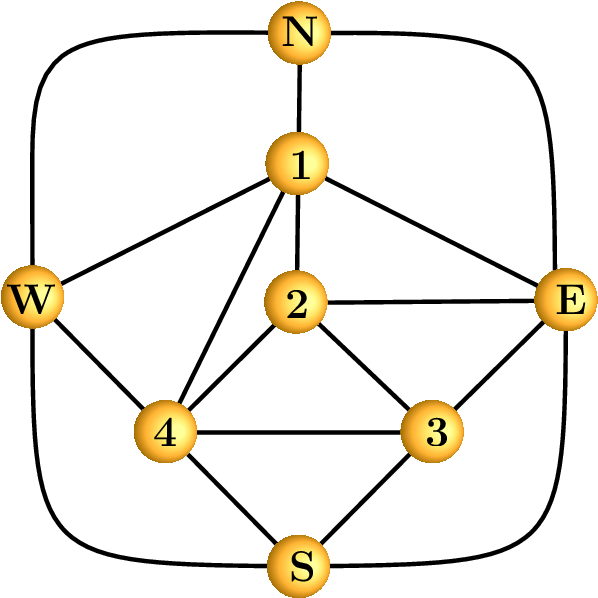}
             \caption{ }
             \label{AU}
         \end{subfigure}
        
            \caption{(a) An extended outerplanar graph that does not admit an area-universal layout. 
            (b) An extended outerplanar graph that admits an area-universal layout.}
            \label{Different4comp}
\end{figure}

\begin{theorem}\label{necessary}
Let $\mathcal{G}$ be an outerplanar graph having exactly two vertices of degree two, and $E(\mathcal{G})$ be its extended graph. If $E(\mathcal{G})$ admits area-universal rectangular layout, then a vertex $v \in V(\mathcal{G})$ must be adjacent to exactly three cardinal vertices in $E(\mathcal{G})$. 
\end{theorem}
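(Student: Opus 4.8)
The plan is to prove the contrapositive through Lemma~\ref{lemmaproof}. By that lemma, $E(\mathcal{G})$ admits an area-universal layout precisely when some regular edge labeling of $E(\mathcal{G})$ has no flippable edge, so it suffices to show that if no vertex of $\mathcal{G}$ is adjacent to three cardinal vertices, then \emph{every} regular edge labeling of $E(\mathcal{G})$ contains a flippable edge. First I would reformulate the conclusion: a boundary vertex can be adjacent to at most three cardinals (adjacency to all four would make its rectangle the whole bounding box), and being adjacent to exactly three is equivalent to being the sole vertex of some boundary path $P_i$ (a \emph{singleton path}), since the three paths it meets must be consecutive $P_{i-1},P_i,P_{i+1}$. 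Thus the hypothesis "no vertex touches three cardinals" becomes "no path is a singleton," i.e. every $P_i$ has at least two vertices.

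Next I would import the structural facts already established in the proof of Theorem~\ref{t1}: since $\mathcal{G}$ has exactly two degree-two vertices, its weak dual $T$ is a path, and each degree-two vertex sits at a junction between two consecutive paths (otherwise a separating triangle arises), so it is adjacent to exactly two cardinals and has degree four in $E(\mathcal{G})$. Under the no-singleton hypothesis the four junctions $J_{12},J_{23},J_{34},J_{41}$ are then four \emph{distinct} boundary vertices, each carrying two cardinals; since $\mathcal{G}$ has only two degree-two vertices, at least two of these junctions have $\mathcal{G}$-degree at least three. I would also record exactly which vertices become degree four in $E(\mathcal{G})$ — namely the two degree-two junctions (which, being degree two, are never endpoints of an internal diagonal) and the $\mathcal{G}$-degree-three vertices that carry a single cardinal — because a flippable edge may not be incident to any of them. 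Crucially, every junction of $\mathcal{G}$-degree at least three has degree at least five, hence is \emph{safe} as a diagonal endpoint.

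The core of the argument then mirrors the flippable-edge analysis of Theorem~\ref{t1}, but applied to the internal diagonals along the spine of the dual path rather than to a degree-three dual vertex. Each internal diagonal $\{x,y\}$ (an edge shared by two consecutive triangles $t_i,t_{i+1}$ with apexes $p,q$) is the diagonal of the four-cycle $x-p-y-q$. Using that at least two junctions are safe, high-degree vertices, I would locate an internal diagonal incident to such a junction whose other endpoint is also safe, and then, fixing an arbitrary regular edge labeling, apply the corner conditions on the two cardinals at that junction together with condition~(I) around the interior endpoints to force the surrounding four-cycle to be alternately labeled. By definition such a diagonal is flippable, contradicting the assumed absence of flippable edges. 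Conversely, collapsing one path to a singleton is exactly what promotes the relevant spine vertex to three cardinal neighbours and breaks this alternating pattern, which is what makes the singleton unavoidable.

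The main obstacle I anticipate is the degree-four exclusion built into the definition of a flippable edge: one must produce a diagonal whose \emph{both} endpoints avoid all degree-four vertices, and this is where the path structure of $T$ and the guaranteed presence of at least two safe high-degree junctions are essential (so that not every diagonal can be disqualified by a degree-three path-interior endpoint). The second delicate point is the universal quantification over regular edge labelings: the alternating four-cycle must be forced for \emph{every} labeling, not merely a convenient one. I expect to discharge this exactly as in Theorem~\ref{t1}, by a short case analysis on the two admissible labels of the two spine edges of the chosen triangle, showing that in each case the diagonal either is already flippable or becomes flippable after a forced relabeling, so that no labeling can escape.
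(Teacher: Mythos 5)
Your overall strategy coincides with the paper's: prove the contrapositive via Lemma~\ref{lemmaproof}, observe that if no vertex of $\mathcal{G}$ is adjacent to three cardinals then there are four distinct corner vertices each adjacent to exactly two cardinals, and then exhibit a flippable edge in every regular edge labeling. Your structural bookkeeping is also sound: the equivalence between ``adjacent to exactly three cardinals'' and ``singleton boundary path,'' the fact that the weak dual is a path, that the two degree-two vertices must sit at junctions, and the inventory of which vertices of $E(\mathcal{G})$ have degree four are all correct (and on the degree-four exclusion you are actually more careful than the paper, which never checks it explicitly for its claimed flippable edges).

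The genuine gap sits exactly at the core of your argument. The two claims that carry all the weight --- (a) that some internal diagonal incident to a safe junction has its other endpoint safe, and (b) that the four-cycle around this diagonal is alternately labeled in \emph{every} REL --- are asserted, not proven. For (a) you give no argument; it can be proved (if every diagonal from a safe junction ended at a degree-four vertex, then $\mathcal{G}$ would be a fan centered at that junction, and the fourth junction would then itself be a safe diagonal endpoint, a contradiction), but nothing of the sort appears in your proposal. For (b) you defer to ``a short case analysis\dots exactly as in Theorem~\ref{t1},'' and that deferral fails: the three labeling patterns in Theorem~\ref{t1} arise from an interior triangular face whose vertices lie on three distinct boundary paths, i.e.\ from a degree-three vertex of the weak dual --- precisely the structure that cannot occur here, since with exactly two degree-two vertices the weak dual is a path. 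The forcing you need comes from a different mechanism: the cardinal constraints must be propagated through the degree-four vertices, using the fact that every interior vertex has all four edge groups non-empty (needed, e.g., to rule out labelings in which the fan edges around a high-degree vertex all fall into one outgoing group, which would kill the alternation); how this propagation runs depends on where the four junctions sit relative to one another. That configuration-dependent analysis is exactly what the paper's proof supplies, by splitting into cases according to the number and arrangement of non-trivial corner-to-corner boundary paths (one; two consecutive; two non-consecutive; three; four) and exhibiting the forced flippable edge in each case. Until you establish (a) and replace (b) with an argument that does not rely on Theorem~\ref{t1}'s triangle analysis, your proposal is a plausible plan rather than a proof.
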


\begin{proof}
    Let $\mathcal{G}$ be a biconnected outerplanar graph of order $n>3$. In order to obtain \emph{extended} outerplanar by augmentation, it is not possible for all four cardinal vertices to be adjacent to a single vertex of the outerplanar graph. Likewise, it is also impossible for each vertex of the outerplanar graph to be adjacent only to one cardinal vertex in the \emph{extended} graph (due to triangulation in \emph{extended} outerplanar graph). Assuming that no vertex is adjacent to exactly three cardinal vertices, there must then exist precisely four vertices, denoted by $\mathcal{C}=\{v_{NE}, v_{SE}, v_{WS}, v_{NW}\}$, each of which is adjacent to exactly two cardinal vertices in the extended graph $E(\mathcal{G})$.
    Without loss of generality, assume that $v_{NE}$ lies in the North-East corner (that is, adjacent to $N$ and $E$), $v_{SE}$ in the South-East corner, $v_{WS}$ in the West-South corner, and $v_{NW}$ in the North-West corner (see Figure~\ref{thpr1}). If the edges $(v_{NE}, v_{SE})$, $(v_{SE}, v_{WS})$, $(v_{WS}, v_{NW})$, and $(v_{NW}, v_{NE})$ exist, then this configuration represents the simplest form that does not admit an area-universal layout (see Figure~\ref{NAU}). 

    Since $\mathcal{G}$ is outerplanar, each vertex must be adjacent to at least one cardinal vertex in the \emph{extended} outerplanar graph $E(\mathcal{G})$. Consequently, all vertices in $V(\mathcal{G}) \setminus \{v_{NE}, v_{SE}, v_{WS}, v_{NW}\}$ are adjacent to exactly one cardinal vertex. We define a path in $\mathcal{G}$ to be \emph{non-trivial} if it is of length at least three and its endpoints lie in $\mathcal{C}$ in the prescribed order. We now examine the conditions under which there exists at least one non-trivial path among the vertices of $\mathcal{C}$.

    \begin{itemize}
        \item[1).] Suppose there exists exactly one non-trivial path in $\mathcal{G}$. Without loss of generality, let $p_1$ be a non-trivial path from $v_{NE}$ to $v_{SE}$, denoted by 
                $p_1 = \{u_{1}, u_{2}, \allowbreak \ldots, u_{k}\},
                $
        where $u_{1}=v_{NE}$ and $u_{k}=v_{SE}$. In this case, all the edges $(u_{j}, u_{i}),\  \forall j>i$ are labeled as $T_1$ (likewise if end points of a non-trivial path are $v_{NW}$ and $v_{NE}$, then edges labeled as $T_2$) in each REL. Furthermore, there must exist a vertex $u_c \in p_1$ that is adjacent to both $v_{NW}$ and $v_{WS}$. Consequently, either the set $\{u_c, u_{c-1}, v_{NW}, v_{WS}\}$ or the set $\{u_c, u_{c+1}, v_{WS}, v_{NW}\}$ (see Figures~\ref{thpr2} and~\ref{thpr3}) labeled alternately in the REL. Hence, a flippable ($(u_c, v_{NW})\text{ or }(u_c, v_{WS}) $) edge necessarily exists in this configuration.
\begin{figure}
         \centering
         \begin{subfigure}[b]{0.23\textwidth}
             \centering
             \includegraphics[width=\textwidth]{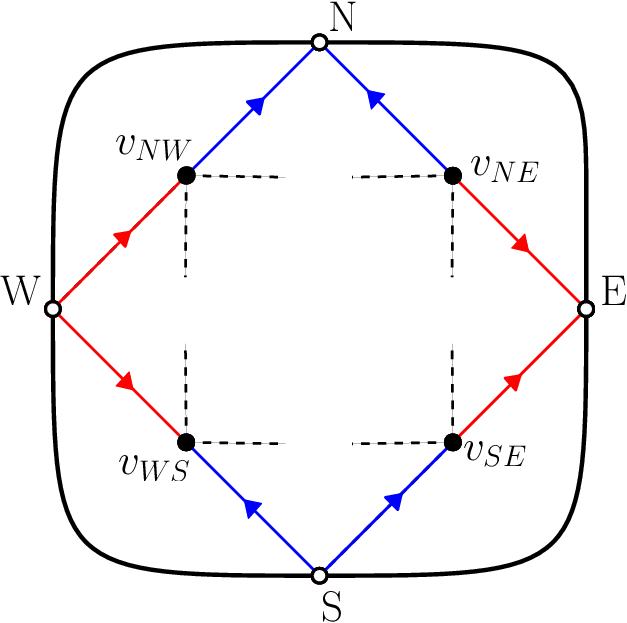}
             \caption{ }
             \label{thpr1}
         \end{subfigure}
         \hspace{1cm}
         \begin{subfigure}[b]{0.23\textwidth}
             \centering
             \includegraphics[width=\textwidth]{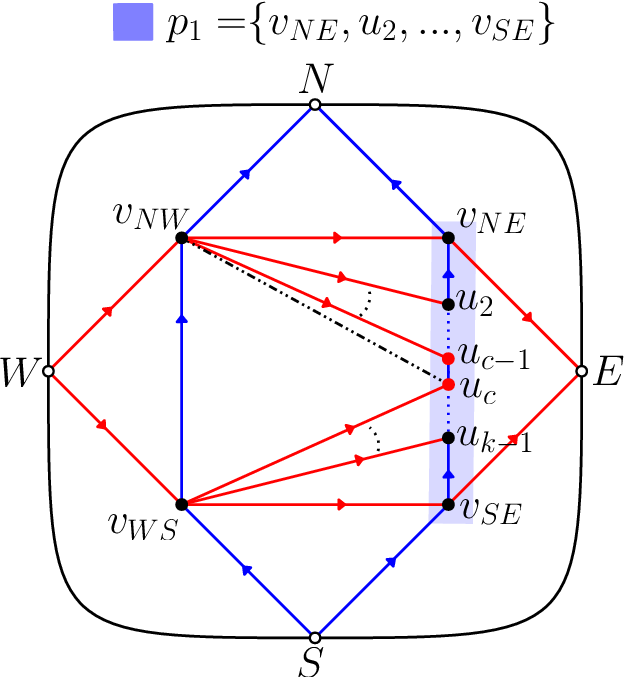}
             \caption{ }
             \label{thpr2}
         \end{subfigure}
         \hspace{1cm}
         \begin{subfigure}[b]{0.23\textwidth}
             \centering
             \includegraphics[width=\textwidth]{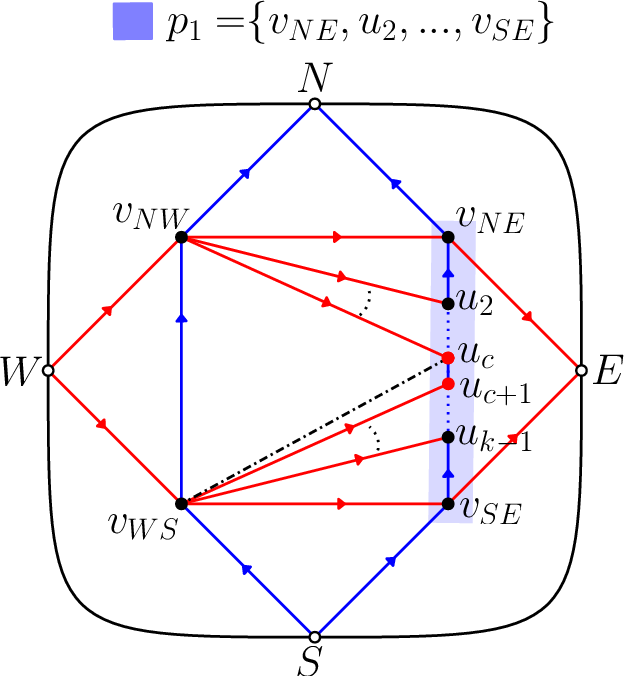}
             \caption{ }
             \label{thpr3}
         \end{subfigure}
        
            \caption{(a) Adjacency of vertices $v_{NE}, v_{ES},v_{WS},$ and $v_{NW}$ with cardinal vertices. (b) Existence of flipable edge $(v_{NW}\ u_c$). (c) Existence of flipable edge $(u_c\ v_{WS})$,}
            \label{Different4comp1}
\end{figure}

        \item[2).] If there exist exactly two non-trivial paths, say $p_1$ and $p_2$, then the analysis can be divided into two subcases: (i) when the two paths are consecutive, and (ii) when the two paths are non-consecutive.

        \begin{itemize}
            \item[2.1).] Let $p_1$ and $p_2$ be two consecutive non-trivial paths. Without loss of generality, let 
            $ p_1=\{u_{11}, u_{12}, \allowbreak \ldots, u_{1k_1}\}$, $p_2=\{ u_{21}, u_{22}, \ldots, u_{2k_2}\}$, where $u_{11}= v_{NE}, u_{1k_1}=u_{2k_2}=v_{SE},$ and $u_{21}= v_{WS}$. This configuration can be further subdivided into three distinct cases.

            \begin{itemize}
                \item[i] The adjacencies of the graph are such that $\deg(v_{WS})=\deg(v_{SE})=\deg(v_{NE})=2$ and $\deg(v_{NW})>2$. By Theorem~\ref{t1}, it follows that $\mathcal{G}$ does not admit an area-universal layout.

                \item[ii]Either there exists a vertex $u_{1c} \in p_1$ that is adjacent to both $v_{NW}$ and $v_{WS}$, or there exists a vertex $u_{2c} \in p_2$ that is adjacent to both $v_{NW}$ and $v_{NE}$. This situation can be resolved by an argument analogous to that of Case~1 (see Figures \ref{thpr4} and \ref{thpr4_1}).
                \item[iii] When $\deg(v_{NW}) = n-1$, it is straightforward to verify that the edge $(v_{NW}, v_{SE})$ is a flippable edge (see Figure \ref{thpr5}).
             \end{itemize}
\begin{figure}
     \centering
     \begin{subfigure}[b]{0.23\textwidth}
         \centering
         \includegraphics[width=\textwidth]{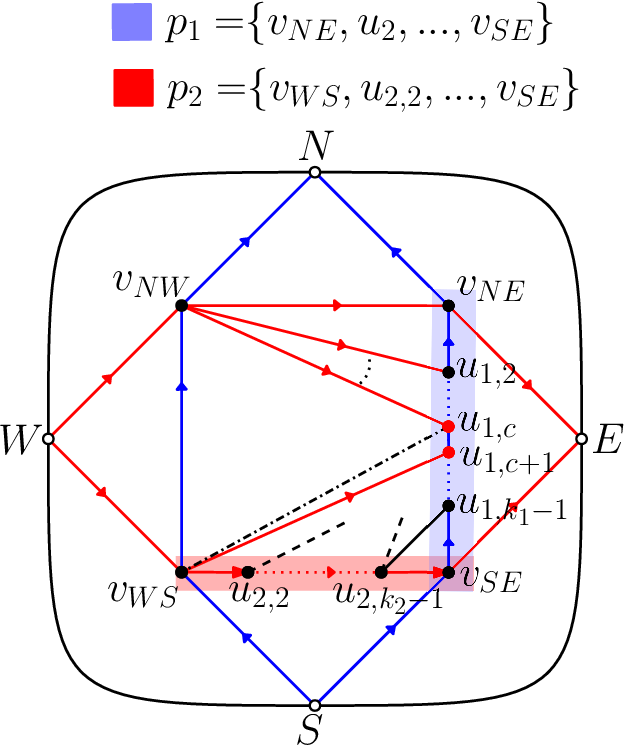}
         \caption{ }
         \label{thpr4}
     \end{subfigure}
     \hspace{1cm}
      \begin{subfigure}[b]{0.23\textwidth}
         \centering
         \includegraphics[width=\textwidth]{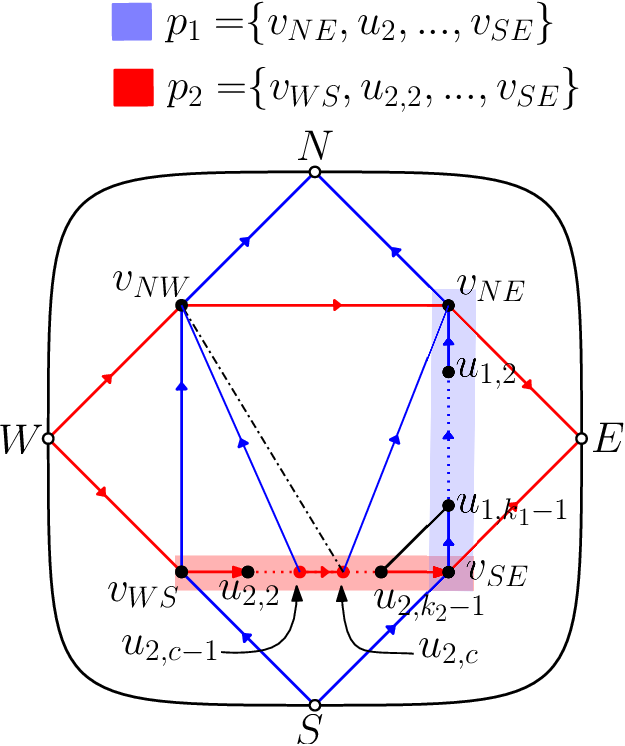}
         \caption{ }
         \label{thpr4_1}
     \end{subfigure}
     \hspace{1cm}
     \begin{subfigure}[b]{0.23\textwidth}
         \centering
         \includegraphics[width=\textwidth]{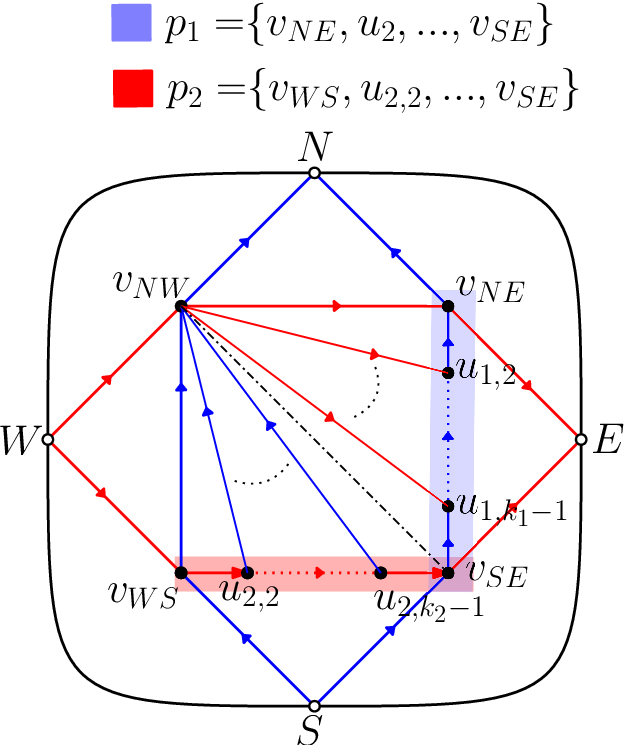}
         \caption{ }
         \label{thpr5}
\end{subfigure}

        \caption{(a) Existence of a flipable edge ($v_{WS}\ u_{1,c}$). (b) Existence of a flipable edge ($v_{NW}\ u_{2,c}$). (c) Existence of a flipable edge ($v_{NW}\ v_{SE}$)}
        \label{Different4comp5}
\end{figure}

            \item[2.2).]Let $p_1$ and $p_2$ be two non-consecutive non-trivial paths. Without loss of generality, let $p_1=\{u_{11}, u_{12}, \allowbreak \ldots, u_{1\ k_1-1}, u_{1k_1}\}$, where $u_{11} = v_{NE},\ u_{1k_1}= v_{SE}$ and $p_2=\{u_{21}, u_{22},\allowbreak \ldots, u_{2\ k_2-1}, u_{2k_2}\}$, where $u_{21}=v_{NW},\ u_{2k_2}= v_{WS}$.
            Note that $v_{NW}$ and $v_{NE}$ cannot both have degree $2$, and likewise $v_{WS}$ and $v_{SE}$ cannot both have degree $2$. For $x\in\{NW,NE,SE,WS\}$ with $\deg(v_x)>2$ we associate an index $m\in\{1,2\}$ by
            \[
            m=\begin{cases}
            1, & x\in\{NW,WS\},\\[2pt]
            2, & x\in\{NE,SE\},
            \end{cases}
            \]
           Let $p_m=\{u_{m1},u_{m2},\dots,u_{m k_m}\}$ be the path whose vertices may be adjacent to $v_x$, listed in the natural order along the path. We call the vertex $u_{m,n}\in p_m$ the \emph{last adjacent vertex} of $v_x$ on $p_m$ if $(v_x,u_{m,n})\in E(\mathcal{G})$ and if $n>1$ then $(v_x,u_{m,n-1})\notin E(\mathcal{G})$. ; equivalently, $(v_x,u_{m,n})\in E(\mathcal{G})$ and, if $n<k_m$, then $(v_x,u_{m,n+1})\notin E(\mathcal{G})$. If $u$ is any neighbour of $v_x$ on the opposite path of $p_m$, then either the edge $(u_{m,n},u)$ or the edge $(u_{m,n},v_x)$  is a flippable edge (see Figures \ref{thpr6} and \ref{thpr7}).

        \end{itemize}
\begin{figure}
     \centering
     \begin{subfigure}[b]{0.27\textwidth}
         \centering
         \includegraphics[width=\textwidth]{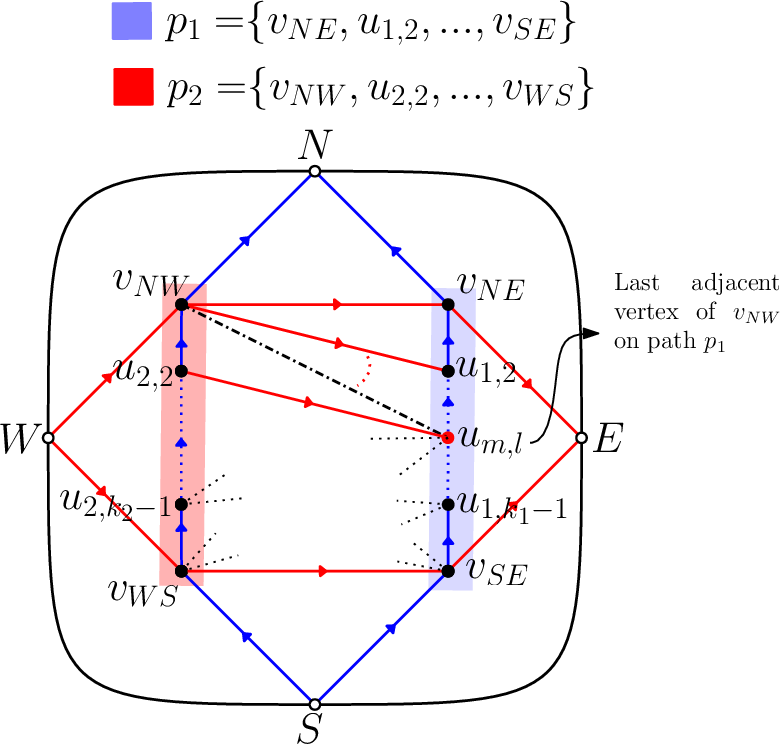}
         \caption{ }
         \label{thpr6}
     \end{subfigure}
     \hspace{0.5cm}
     \begin{subfigure}[b]{0.27\textwidth}
         \centering
         \includegraphics[width=\textwidth]{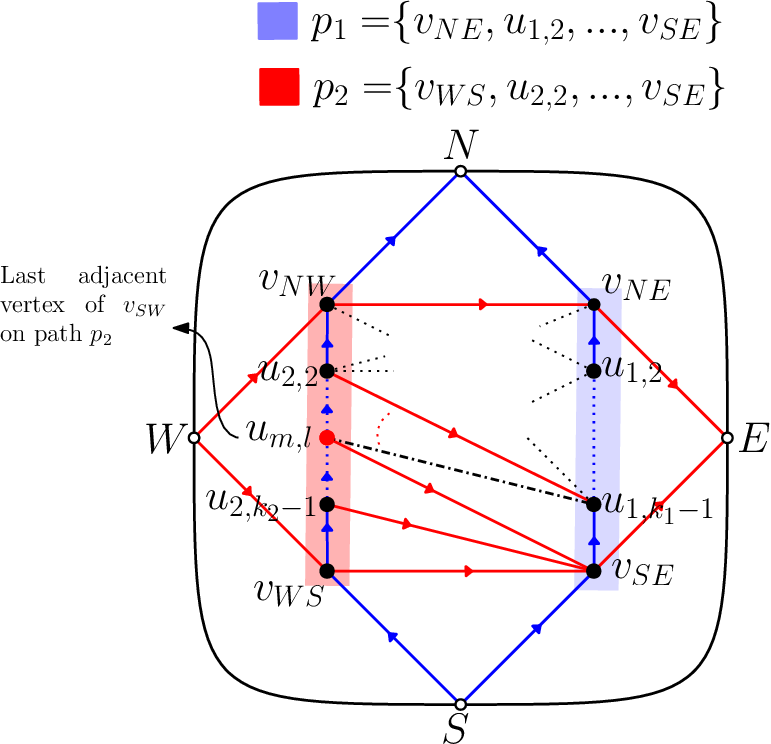}
         \caption{ }
         \label{thpr7}
     \end{subfigure}
     \hspace{0.5cm}
     \begin{subfigure}[b]{0.22\textwidth}
         \centering
         \includegraphics[width=\textwidth]{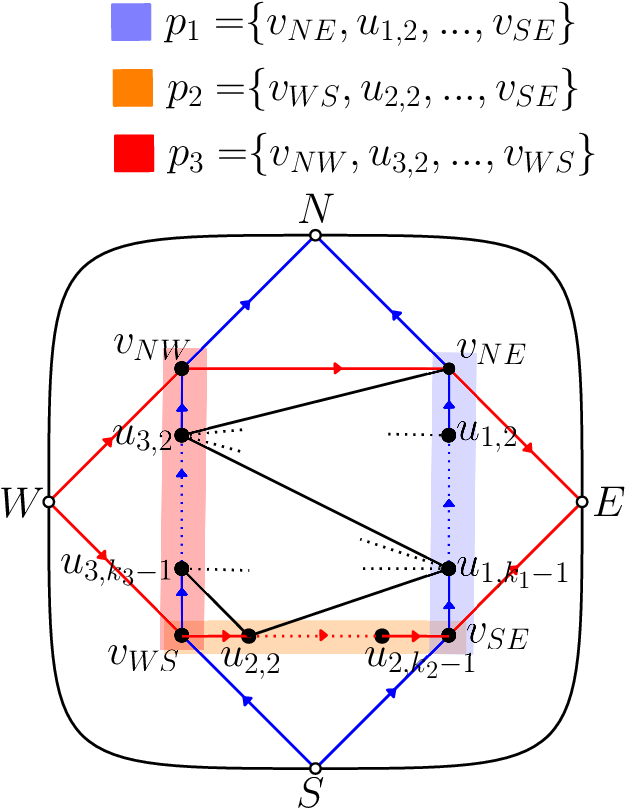}
         \caption{ }
         \label{thpr8}
     \end{subfigure}
        \caption{(a) Existence of a flipable edge ($v_{x}\ u_{m,l}$), $x=NW,\ m=1$,  (b) Existence of a flipable edge ($u_{m,l}\ u$), $m=2$ and $u=u_{1,k_1-1}$ (c) $deg(v_{NW})=deg(v_{SE})=deg(v_{WS})=2$ in the outerplanar graph $\mathcal{G}$.}
        \label{Different4comp2}
\end{figure}

        \item[3).]If there exist exactly three non-trivial paths, say $p_1, p_2,$ and $p_3$, we may assume without loss of generality that  
        $
        p_1 = \{u_{11}, u_{12}, \ldots, u_{1k_1}\},\  u_{11}=v_{NE}, \ u_{1k_1}=v_{SE},
        $  
        $
        p_2 = \{u_{21}, u_{22}, \ldots, u_{2k_2}\}, \ u_{21}=v_{WS}, \ u_{2k_2}=v_{ES},
        $  
        and  
        $
        p_3 = \{u_{31}, u_{32}, \ldots, u_{3k_3}\}, \ u_{31}=v_{NW}, \ u_{3k_3}=v_{WS}.
        $  
        In such a configuration, it is clear that vertices $v_{NW}$ and $v_{NE}$ cannot both have degree two. Furthermore, if no vertex on path $p_1$ is adjacent to $v_{WS}$, no vertex on path $p_3$ is adjacent to $v_{ES}$, and no vertex on path $p_1$ is adjacent to either $v_{NW}$ or $v_{NE}$, then the resulting outerplanar graph contains three vertices of degree two (see Figure \ref{thpr8}). Such a graph does not admit an area-universal layout (see Theorem~\ref{t1}). In all other admissible adjacency configurations, exactly two vertices will have degree two, which we analyze further in the following two subcases.  
        \begin{itemize}
            \item[i] When $\deg(v_{NW})$ and $\deg(v_{NE})$ are both greater than two, this situation arises only if there exists a vertex $u_{2c}$ on the path $p_2$ such that it is adjacent to both $v_{NE}$ and $v_{NW}$. In this case, we obtain a flippable edge—either $(u_{2c}, v_{NW})$ or $(u_{2c}, v_{NE})$—as already discussed in Case~1 (see Figure \ref{thpr9}).
            \item[ii]  When $\deg(v_x) > 2$ for some $x \in \{NW, NE\}$, define $m=3$ if $x=NE$ and $m=1$ if $x=NW$. Let $u_{m,l}$ denote the last vertex on path $p_m$ adjacent to $v_x$; that is,  
            $
            (v_x, u_{m,l}) \in E(\mathcal{G}), \  \text{and if } l < k_m \text{ then } (v_x, u_{m,l+1}) \notin E(\mathcal{G}).
            $
            In this situation, either the edge $(v_x, u_{m,l})$ or the edge $(u_{m,l}, u_{p,2})$ must be flippable, where $p=2$ if $x=NE$ and $p=3$ if $x=NW$ (see Figures \ref{thpr10} and \ref{thpr11}).

        \end{itemize}
        \begin{figure}
     \centering
     \begin{subfigure}[b]{0.26\textwidth}
         \centering
         \includegraphics[width=\textwidth]{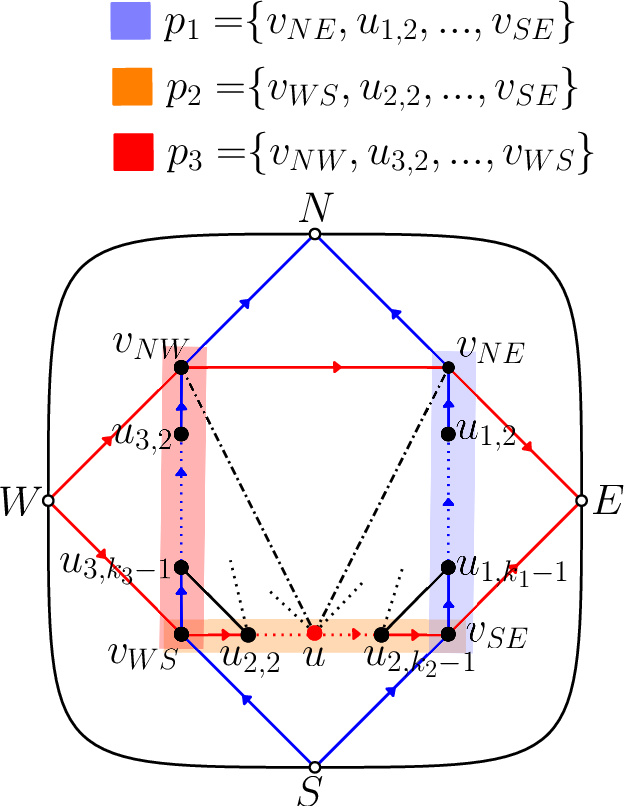}
         \caption{ }
         \label{thpr9}
     \end{subfigure}
     \hspace{0.5cm}
     \begin{subfigure}[b]{0.33\textwidth}
         \centering
         \includegraphics[width=\textwidth]{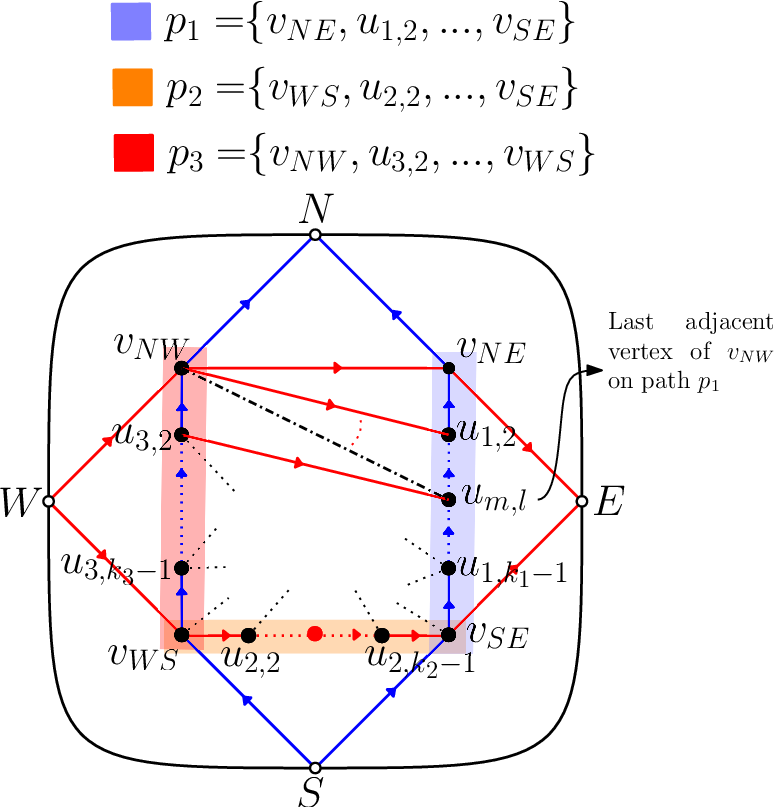}
         \caption{ }
         \label{thpr10}
     \end{subfigure}
     \hspace{0.5cm}
     \begin{subfigure}[b]{0.33\textwidth}
         \centering
         \includegraphics[width=\textwidth]{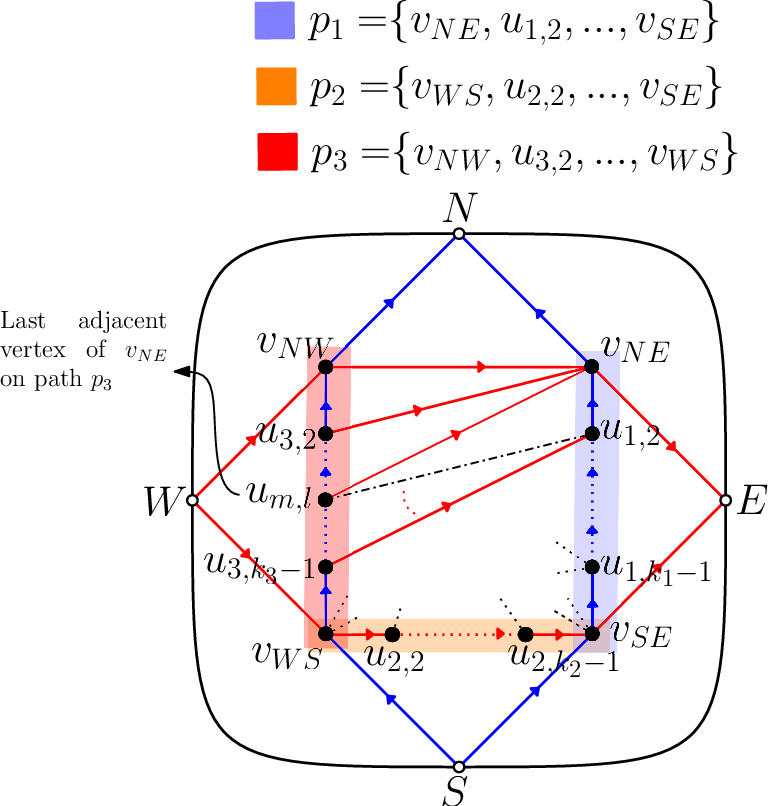}
         \caption{ }
         \label{thpr11}
     \end{subfigure}
        \caption{(a) Existence of a flipable edge either ($v_{NW}\ u$) or ($v_{NE}\ u$),  (b) Existence of a flipable edge ($u_{m,l}\ v_{x}$), $m=1$ and $x=NW$, (c) Existence of a flipable edge ($u_{m,l}\ u_{p,2}$), $m=3$ and $p=1$.}
        \label{Different4comp3}
\end{figure}
        \item[4).] If there exist exactly four non-trivial paths, say $p_1, p_2, p_3$ and $p_4$, we may assume without loss of generality that  
        $
        p_1 = \{u_{11}, u_{12}, \ldots, u_{1k_1}\},\  u_{11}=v_{NE}, \ u_{1k_1}=v_{SE},
        $  
        $
        p_2 = \{u_{21}, u_{22}, \ldots, u_{2k_2}\}, \ u_{21}=v_{WS}, \ u_{2k_2}=v_{ES},
        $    
        $
        p_3 = \{u_{31}, u_{32}, \ldots, u_{3k_3}\}, \ u_{31}=v_{NW}, \ u_{3k_3}=v_{WS},
        $
        and
        $
        p_4 = \{u_{41}, u_{42}, \ldots, u_{4k_4}\}, \allowbreak \ u_{41}=v_{NW}, \ u_{4k_4}=v_{NE}.
        $ We exclude the configurations that contain three or more degree-2 vertices in the outerplanar graph $\mathcal{G}$ (these are known not to admit an area-universal layout (Theorem \ref{t1}). For any consecutive paths $p_m,p_n$ (with $m,n\in\{1,2,3,4\}$ and $n\equiv m+1\pmod 4$), suppose $v_x$ with $x\in\{NW,NE,SE,WS\}$ does not lie on both $p_m$ and $p_n$.  Define $u_{j,k}$ to be the vertex in $(p_m\cup p_n)\cap N(v_x)$ such that if $k>2$ then $(v_x,u_{j,k-1})\notin E(\mathcal{G})$ or if $k<k_j$, then $(v_x,u_{j,k+1})\notin E(\mathcal{G})$. (i.e. the last vertex on $p_m$ or $p_n$ that is adjacent to $v_x$).  Let $p_{m+2}$ denote the path opposite $p_m$ (indices modulo $4$).  If $u_{j,k}$ lies on $p_m$ (respectively $p_n$), let $u$ be the neighbor of $v_x$ that lies on the opposite path $p_{m+2}$. Then either $(u_{j,k}, u)$ or $(u_{j,k}, v_x)$ must be a flippable edge (see Figures \ref{thpr12} and \ref{thpr13}).
        \end{itemize}

     \noindent In all cases considered above, if no vertex of $\mathcal{G}$ is adjacent to exactly three cardinal vertices in the extended graph $E(\mathcal{G})$, then every regular edge labeling of $E(\mathcal{G})$ contains at least one flippable edge. Consequently, the corresponding rectangular layout cannot be area-universal (see Lemma \ref{lemmaproof}).
 \hfill       $\square$

\begin{figure}
     \centering
     \begin{subfigure}[b]{0.3\textwidth}
         \centering
         \includegraphics[width=\textwidth]{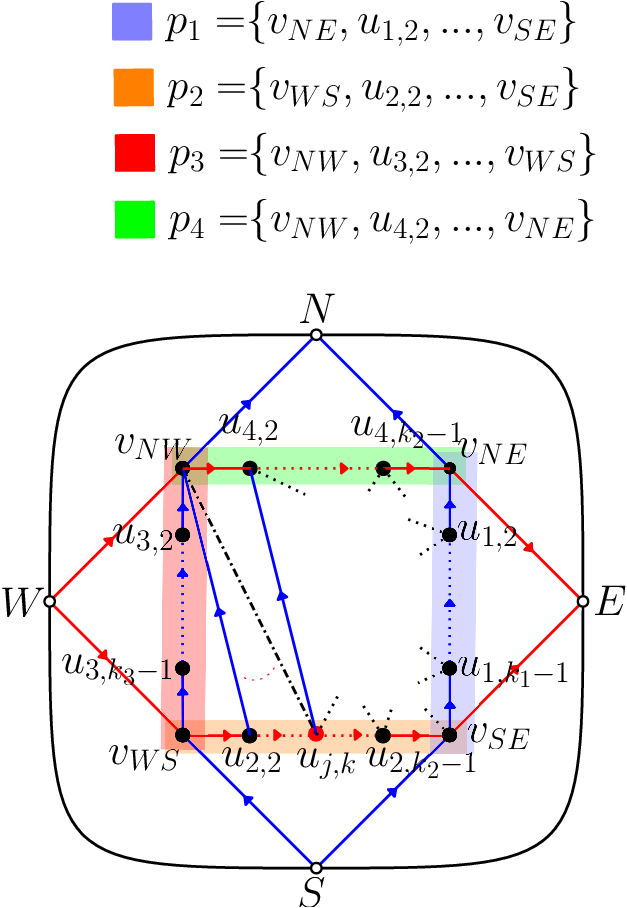}
         \caption{ }
         \label{thpr12}
     \end{subfigure}
     \hspace{0.5cm}
     \begin{subfigure}[b]{0.3\textwidth}
         \centering
         \includegraphics[width=\textwidth]{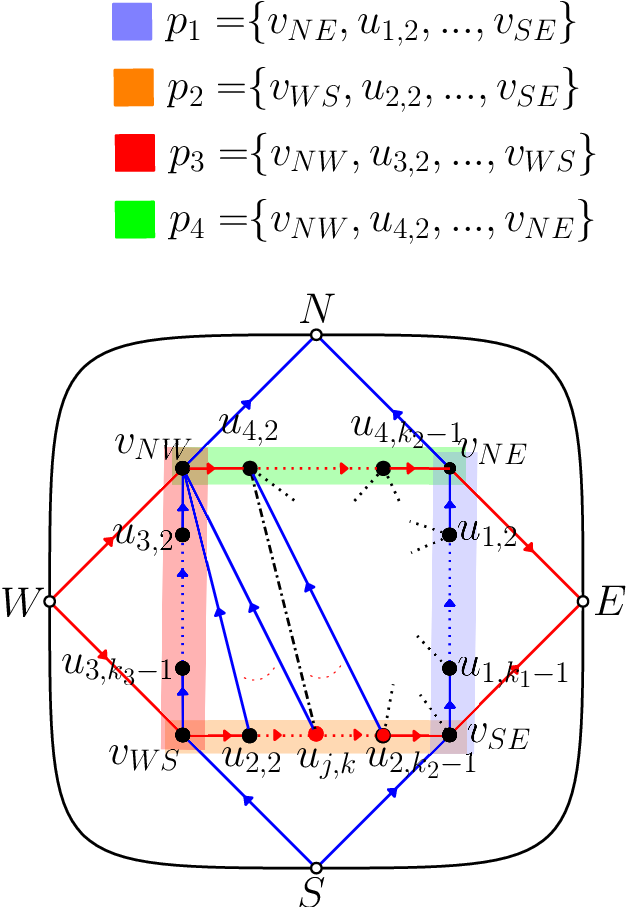}
         \caption{ }
         \label{thpr13}
     \end{subfigure}
        \caption{(a) Existence of a flipable edge either ($v_{x}\ u_{j,k}$), $j=2, x={NW}$ (b) Existence of a flipable edge ($u_{j,k}, u$), $j=2$ and $u=u_{4,2}$.}
        \label{Different4comp4}
\end{figure}
\end{proof}

\section{Methodology I: Construction of Area-Universal Rectangular Layouts}\label{Highercase}

  To address the area–universal layout problem for biconnected outerplanar graphs, we first need a mechanism that identifies whether the given input graph satisfies the necessary structural condition for such a layout to exist. Specifically, a biconnected outerplanar graph admits an area–universal layout only if it contains exactly two vertices of degree two (see Theorem \ref{t1}). This condition can be checked in linear time by simply counting degree-2 vertices.


\noindent
The purpose of \textsc{Outer4Completion} is to transform a given biconnected outerplanar graph into an \emph{extended} outerplanar graph such that the resulting extension admits an area-universal rectangular layout. In particular, the augmentation ensures that every regular edge labeling of the extended graph $E(\mathcal{G})$ contains no flippable edges.

\noindent
\textbf{Notation and helper functions:}
\begin{itemize}
  \item $B=(b_1,\dots,b_n)$: ordered outer-face cycle.
  \item $d_1,d_2$: the two degree-two vertices.
  \item $\text{CARD}=[N,E,S,W]$: array of the four new cardinal vertices.
  \item $\text{indexOf}(v,B)$: returns the position of $v$ in $B$.
  \item $\text{addEdge}(u,w)$: inserts an edge $(u,w)$ while preserving planarity.
  \item For $1 \le i \le 4$, $P_i$ denote the set of vertices selected from the outer cycle $B$, listed in the same ordering as they appear in $B$, and determined by the vertices $v, d_1,$ and $d_2$.

  \item $E_c$: set of edges introduced during the augmentation process to construct $E(\mathcal{G})$.
\end{itemize}

This augmentation procedure is formally described in Algorithm~\ref{algo2}. Given a biconnected outerplanar graph $\mathcal{G}$ with outer--face cycle
$B=(b_1,\ldots,b_n)$, and exactly two degree--2 vertices $d_1$ and $d_2$, the
algorithm augments $\mathcal{G}$ by introducing four new vertices
$N,E,S,W$ arranged in clockwise order along the boundary. These vertices,
collectively denoted as $\text{CARD}=[N,E,S,W]$, serve as the cardinal vertices. A pivot vertex $v\in V$ with degree greater than two
is then selected. The outer boundary cycle $B$ is traversed clockwise starting
from the successor of $v$, and is partitioned into four sets
$P_1,P_2,P_3$,and $P_4$ determined by the locations of $d_1$ and $d_2$ on $B$.
\begin{itemize}
  \item[$P_1$]: consisting solely of the vertex $v$,
  \item[$P_2$]: the vertices encountered from $v$ to the first degree--2 vertex $d_1$
        in clockwise order along $B$,
  \item[$P_3$]: the vertices from $d_1$ to the second degree--2 vertex $d_2$
        in clockwise order along $B$,
  \item[$P_4$]: the remaining vertices from $d_2$ back to $v$.
\end{itemize}
The vertices in $P_i$, $1\leq i \leq 4$ are then made adjacent to their designated cardinal vertices, with vertex $v$ itself adjacent to three consecutive cardinal vertices $(W,N,E)$. Finally, the four cardinal vertices are connected by edges into the four-cycle
$(N,E,S,W)$ so that the new exterior face of the augmented graph is a
quadrilateral. The output is the augmented graph $\mathcal{G}_v$ that serves
as a valid \emph{extended} outerplanar of $\mathcal{G}$, guaranteeing that it admit an area--universal rectangular layout.

\begin{algorithm}[H]
\caption{Outer4Completion($\mathcal G,B,d_1,d_2$)}
\label{algo2}
\begin{algorithmic}[1]\small
\State \textbf{Input:} biconnected outerplanar graph $\mathcal G=(V,E)$, outer-face cycle $B=(b_1,\dots,b_n)$ (clockwise), degree-$2$ vertices $d_1,d_2$
\State \textbf{Output:} augmented graph $\mathcal G_v=(V\cup\{N,E,S,W\},E\cup E_C)$
\State $V_{>2}\gets\{u\in V:\deg_{\mathcal G_p}(u)>2\}$
\State add vertices $N,E,S,W$ on outer boundary; CARD$=[N,E,S,W]$ \Comment{clockwise indices $0..3$}

\State choose a vertex $v\in V_{>2}$
  \State $\mathcal{G}_v\leftarrow \mathcal{G}$
    \State addEdge$(v,W)$, addEdge$(v,N)$, addEdge$(v,E)$
    \State $i\leftarrow$ (indexOf$(v,B)+1)\bmod n$,
  \State $P_2 = \phi$, $P_3 =\phi$, $P_4 =\phi$
  \While{true}
    \State append $B[i]$ to $P_2$; \If{$B[i]=d_1$ or $B[i]=d_2$}
    \State append $B[i]$ to $P_3$
    \State $i\leftarrow (i+1) \bmod n$
    \State \textbf{break} \EndIf
    \State $i\leftarrow(i+1)\bmod n$
  \EndWhile
  \ForAll{$u\in P_2$} addEdge$(u,E)$ \EndFor
  \While{true}
    \State append $B[i]$ to $P_3$; \If{$B[i]=d_1$ or $B[i]=d_2$}
    \State append $B[i]$ to $P_4$
    \State $i\leftarrow (i+1) \bmod n$
    \State \textbf{break} \EndIf
    \State $i\leftarrow(i+1)\bmod n$
  \EndWhile
  \ForAll{$u\in P_3$} addEdge$(u,S)$ \EndFor
  \While{true}
    \State \If{$B[i]=v$} then \textbf{Break} \EndIf
    \State append $B[i]$ to $P_4$
    \State $i\leftarrow (i+1) \bmod n$
  \EndWhile
  \ForAll{$u\in P_4$} addEdge$(u,W)$ \EndFor
    \State addEdge($\mathcal{G}_v,(N,E)$); addEdge($\mathcal{G}_v,(E,S)$); addEdge($\mathcal{G}_v,(S,W)$); addEdge($\mathcal{G}_v,(W,N)$)
    \State \Return $G_v$ \Comment{success: augmented graph}

\end{algorithmic}
\end{algorithm}

\subsection{Correctness and Time Complexity}
\noindent
To establish that the \textsc{Outer4Completion} procedure always produces an
\emph{extended} outerplanar graph that admits an area--universal layout, independent
of the particular chosen REL, we rely on the following
theorem. Specifically, we show that for any biconnected outerplanar graph $\mathcal{G}$,
if there exists a vertex $v$ with $\deg(v) > 2$ that can be made adjacent to three consecutive cardinal vertices, then the resulting \emph{extended} outerplanar graph admit an area--universal layout
corresponding to every possible REL. This claim is formally stated in
Theorem~\ref{sufficient}.

 \begin{theorem}\label{sufficient}
Let $\mathcal{G}$ be a biconnected outerplanar graph of order $n>3$, and let $v \in V(\mathcal{G})$ be a vertex of degree $k > 2$. It is sufficient for $E(\mathcal{G})$ to admit a one-sided rectangular layout (area-universal layout) if $v$ is adjacent to exactly three cardinal vertices in $E(\mathcal{G})$.
 \end{theorem}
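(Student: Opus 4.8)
The plan is to reduce the claim to the flippable-edge criterion of Lemma~\ref{lemmaproof} and then to show that, under the stated hypothesis, \emph{no} regular edge labeling of $E(\mathcal{G})$ can contain a flippable edge. The first observation I would record is purely combinatorial and REL-independent: if $v$ is adjacent to exactly three cardinal vertices, then (up to symmetry) these are $W,N,E$, and in every rectangular layout obtained from $E(\mathcal{G})$ the rectangle of $v$ simultaneously touches the left, top, and right sides of the bounding box; hence $v$ is forced to be the full-width \emph{top strip}. I would also recall from the proof of Theorem~\ref{t1} that having exactly two degree-$2$ vertices forces the weak dual of $\mathcal{G}$ to be a path, so $\mathcal{G}$ contains no interior triangular face. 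This is precisely the configuration that manufactured flippable edges in Theorem~\ref{necessary}, and its absence is what the argument will exploit.

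Next I would classify the interior edges of $E(\mathcal{G})$ into three types and dispose of two of them uniformly, recalling that a flippable edge must be the diagonal of a four-cycle whose four boundary edges are alternately labeled, and that outerplanarity (which forbids $K_4$) guarantees each such four-cycle is induced. (a) For an edge $(b,X)$ joining a vertex $b$ to a cardinal vertex $X$, the two triangles sharing it meet at $X$, so the two edges of the enclosing four-cycle that are incident to $X$ are \emph{consecutive} around the cycle; by REL condition~(ii) all edges at a fixed cardinal vertex carry the same label, so these two consecutive edges agree and the four-cycle cannot be alternately labeled. Hence no cardinal edge is ever flippable, in any REL. (b) For a chord incident to $v$, both triangles on it have $v$ as a vertex, so the two four-cycle edges meeting at $v$ are again consecutive; since $v$ is the top strip, every edge leaving $v$ downward is a north–south adjacency and therefore lies in $T_1$ with the same orientation, so once more two consecutive edges agree and alternation fails. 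Thus no chord incident to $v$ is flippable, in any REL.

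The remaining, and essential, case is a chord $e=(b_i,b_k)$ not incident to $v$. By the path structure of the weak dual, such a chord lies inside one of the at most two \emph{hidden} sub-polygons $H_1,H_2$ that hang off the extreme edges of the fan of $v$; each $H_r$ is itself a path-triangulation attached to $\mathcal{G}$ along a single base edge, and each has exactly two degree-$2$ vertices of its own (its dual is a sub-path). Geometrically, because $v$ is the top strip, $H_r$ is realized inside a sub-rectangle lying below $v$, with its base edge represented along the corresponding side, and inside this sub-rectangle the same phenomenon recurs: one endpoint of the base edge becomes a room spanning the orthogonal dimension of the sub-rectangle, playing the role that $v$ played for the whole layout. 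I would therefore set up an induction—on $|V(\mathcal{G})|$, equivalently on the length of the dual path—whose hypothesis asserts that a path-triangulation with a distinguished base edge, realized with that base along one side of its bounding rectangle, is one-sided. Applying the hypothesis to each $H_r$ shows that no chord interior to $H_r$ is flippable. Combining the three cases, every REL of $E(\mathcal{G})$ is flippable-free, and Lemma~\ref{lemmaproof} (equivalently Theorem~\ref{thm:area-universal-characterization}) yields area-universality.

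The main obstacle is the inductive step of this third case: formulating the inductive invariant so that it is genuinely self-reproducing. One must pin down how the distinguished base edge and its associated spanning room propagate—note that the spanning direction alternates between horizontal and vertical as one descends the dual path, producing a nested staircase of one-sided cuts—and one must verify that this nested structure is forced in \emph{every} regular edge labeling, not merely in the canonical layout produced by Algorithm~\ref{algo2}. The supporting facts (no interior triangular face, the two degree-$2$ vertices forced into corner positions, and the label-forcing at $v$ and at the cardinal vertices) should make each individual reduction routine; the care lies in choosing the invariant for the hidden sub-polygon—specifically which of its two degree-$2$ vertices and which spanning room take over the roles of the corners and of $v$—so that the recursion closes without leaving an unhandled chord.
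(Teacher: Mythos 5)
Your overall strategy---reduce to ``no REL of $E(\mathcal{G})$ contains a flippable edge'' via Lemma~\ref{lemmaproof} and then rule out alternating four-cycles---is exactly the paper's strategy, and your cases (a) and (b) are essentially sound: all interior edges at a fixed cardinal vertex share a label, and all edges from $v$ to non-cardinal vertices are forced into $T_1$, so any enclosing four-cycle with two such edges meeting at a cardinal or at $v$ cannot alternate. (Minor patch needed in (b): when one vertex of the enclosing four-cycle of a chord at $v$ is itself a cardinal, the two cycle edges at $v$ carry \emph{different} labels, and you must instead invoke the case-(a) reasoning at that cardinal vertex; the conclusion still holds.) The genuine gap is your case (c), and you concede it yourself: the chords not incident to $v$ or to a cardinal are precisely where the theorem's content lies, and your treatment of them is an induction whose invariant you never formulate and whose inductive step you never verify. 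A proof proposal that says ``the main obstacle is the inductive step'' has not proved the statement.

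Beyond being unfinished, the induction as sketched faces a structural obstruction. The hypothesis you want to apply to a hidden sub-polygon $H_r$ is geometric (``a path-triangulation realized with its base edge along one side of its bounding rectangle is one-sided''), but the union of the rectangles realizing $H_r$ inside a layout of $E(\mathcal{G})$ need not be a rectangle at all (it can be an L-shape or staircase), and a REL of $E(\mathcal{G})$ does not restrict in any evident way to a REL of an extended graph of $H_r$; so the hypothesis does not self-apply, which is exactly the difficulty you flag. The paper avoids induction entirely: once $v$ is adjacent to $W,N,E$, the boundary partition $P_1=\{v\}, P_2, P_3, P_4$ is unique, the third neighbor of $v$ is forced to lie on $P_3$, and no $P_2$--$P_4$ adjacency can exist; consequently (using outerplanarity to exclude crossing chords) every four-cycle of $E(\mathcal{G})$ distributes its vertices $2{+}2$ or $1{+}3$ over at most two of these paths, and the labels forced on within-path and between-path edges make such a cycle non-alternating. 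Your case (c) closes immediately under this global partition argument; the forced-label facts you would need to make your induction rigorous are the same facts that, once established, render the induction unnecessary.
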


 \begin{proof}
     Let $\mathcal{G}$ be a biconnected outerplanar graph embedded in the plane with its vertices listed in clockwise order.  Choose a vertex $v\in V(\mathcal{G})$ of degree more than 2 and make it adjacent to exactly three cardinal vertices; denote the four cardinals in clockwise order by $N,E,S,W$.  Then $V(\mathcal{G})$ is partitioned uniquely into four contiguous boundary paths $P_1,P_2,P_3,P_4$.  The uniqueness follows from outerplanarity together with the fact that $v$ ($deg(v)>2$) is incident to three distinct cardinals (see Figures \ref{th2pr1} and \ref{th2pr2}).  We adopt the convention $P_1=\{v\}$ and take $P_2,P_3,P_4$ in clockwise order from $v$, so that $P_2$ begins at $v$ and ends at the first degree-$2$ vertex $d_1$, $P_3$ runs from $d_1$ to the other degree-$2$ vertex $d_2$, and the remaining boundary vertices (including $d_2$) lie on $P_4$.  Any other partition of the outer face into four paths would produce a separating triangle and hence would not correspond to a valid RFP.

Since $v$ belongs to both $P_2$ and $P_4$, realizing the third (non-cardinal) adjacency of $v$ by connecting it to a vertex of either $P_2$ or $P_4$ would create a forbidden local configuration (a complex/separating triangle).  Therefore, the third neighbor that witnesses $\deg(v)>2$ must lie on $P_3$; denote this vertex by $w\in P_3$.  With this choice, no adjacency between the vertices of $P_2$ and the vertices of $P_4$ is possible (see Figure \ref{th2pr3}). 

\begin{figure}
     \centering
     \begin{subfigure}[b]{0.25\textwidth}
         \centering
         \includegraphics[width=\textwidth]{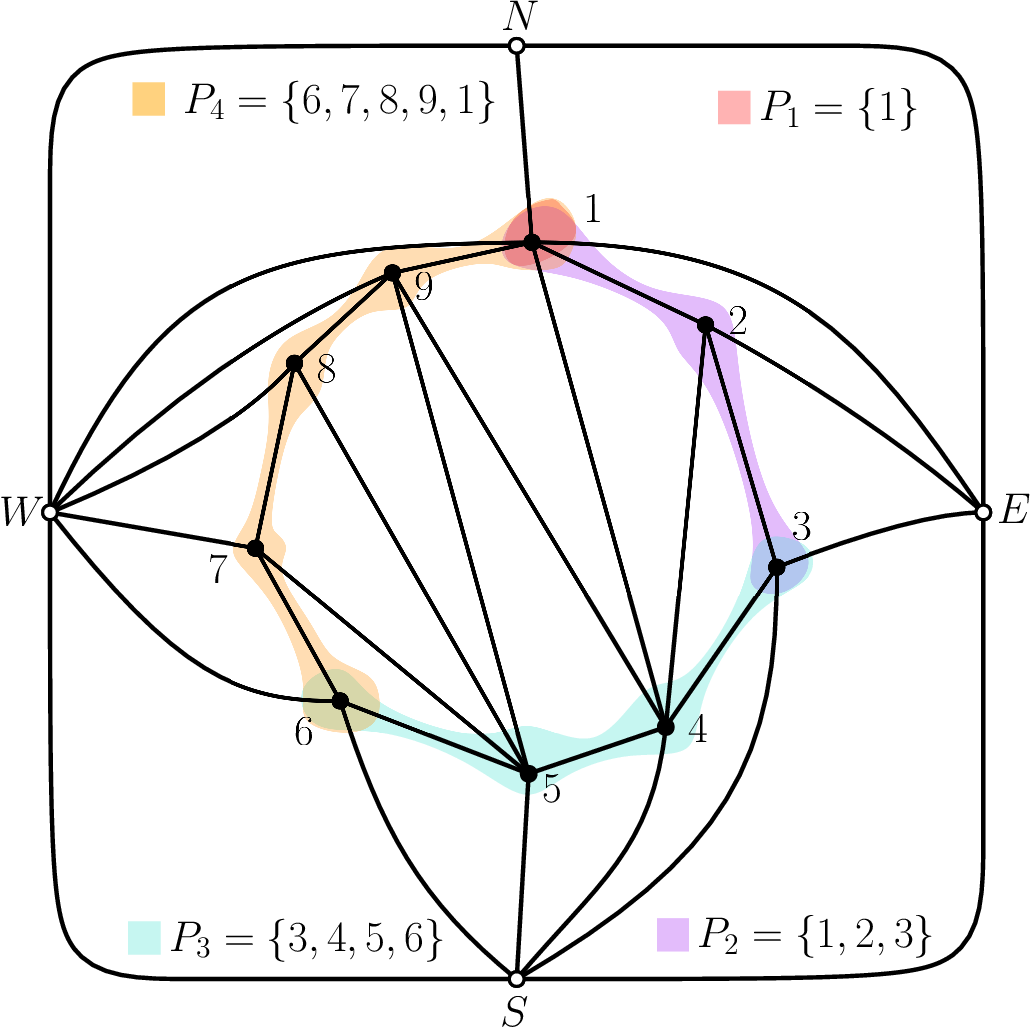}
         \caption{ }
         \label{th2pr1}
     \end{subfigure}
     \hspace{1cm}
     \begin{subfigure}[b]{0.25\textwidth}
         \centering
         \includegraphics[width=\textwidth]{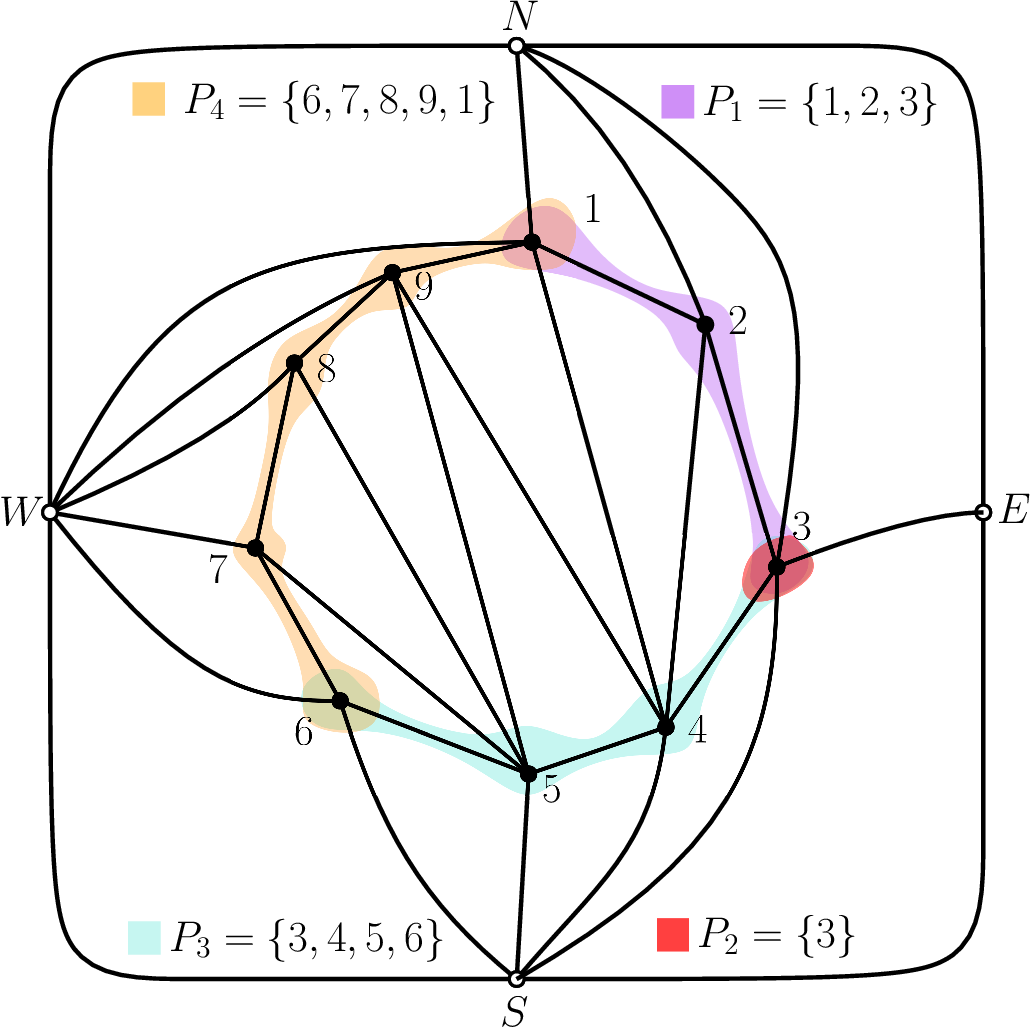}
         \caption{ }
         \label{th2pr2}
     \end{subfigure}
     \hspace{1cm}
     \begin{subfigure}[b]{0.25\textwidth}
         \centering
         \includegraphics[width=\textwidth]{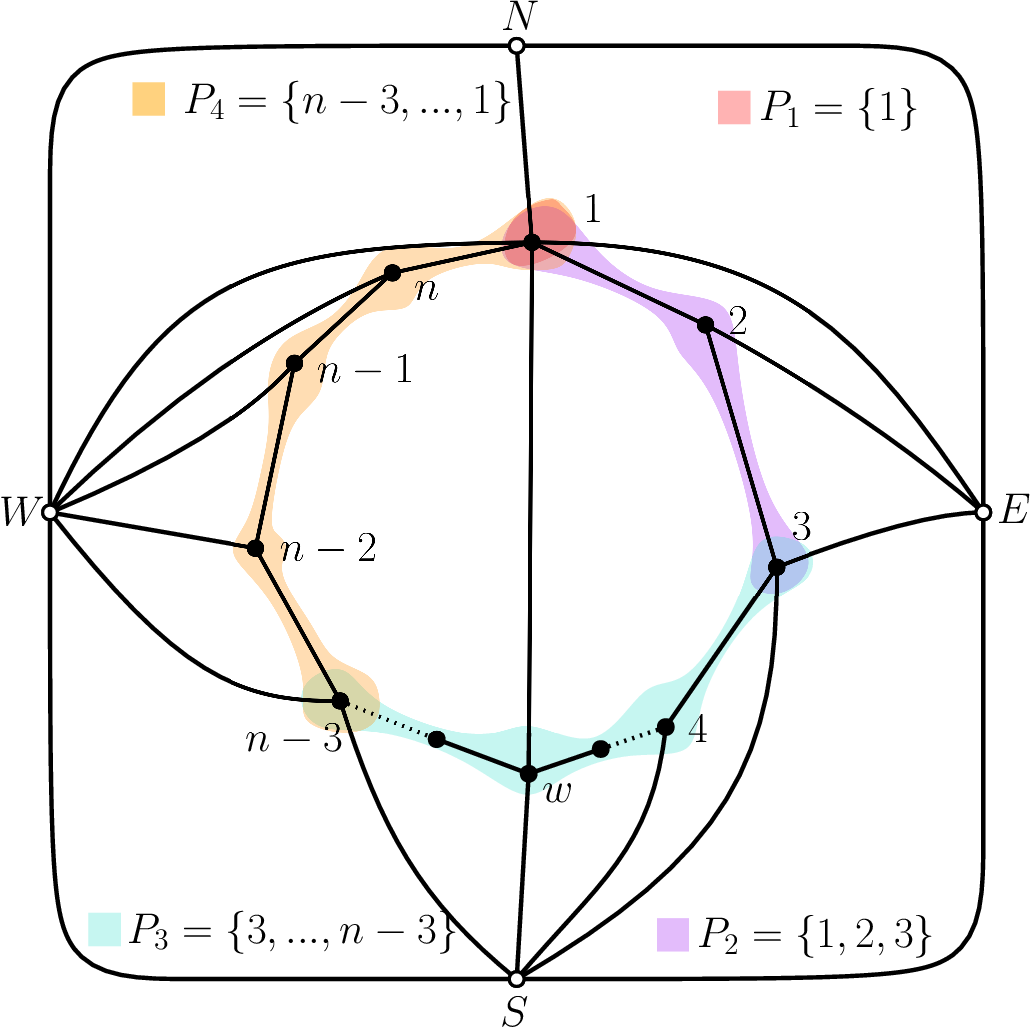}
         \caption{ }
         \label{th2pr3}
     \end{subfigure}

        \caption{(a) Unique partition of paths when vertex $1$ is adjacent to three cardinal vertices. 
        (b) Multiple partitions are possible in an outerplanar graph, even when vertex $3$ is adjacent to three cardinal vertices. 
        (c) The adjacency $(u,v)$ does not exist, where $u \in P_2$ and $v \in P_4$.
        }
        \label{theorem4}
\end{figure}
Now, vertices in the path $P_2$ can be adjacent only to vertices in the path $P_3$, except for one adjacency with vertex in the path $P_1$. Similarly, vertices in the path $P_4$ can be adjacent only to vertices in the path $P_3$, apart from one adjacency with vertex in the path $P_1$. Under this partition, there are only two possible four-cycle configurations: either two vertices from one path $p_i$ and two from another path $p_j$, or one vertex from one path $p_i$ and three from another path $p_j$. In both cases, an alternating four-cycle, that is, a four-cycle whose edges are alternately colored in the regular edge labeling, cannot occur.
Consequently, no flippable edge arises in any REL produced by this construction. Therefore, any REL derived from this augmentation will ultimately yield an area-universal layout.\hfill $\square$
\end{proof}
\noindent
The time complexity of Algorithm~\ref{algo2} can be analyzed as follows. 
Constructing the set $V_{>2}$ of vertices of degree greater than two requires 
a single pass over all vertices of $\mathcal{G}$, which takes $O(n)$ time, 
where $n=|V|$. The algorithm then selects one pivot vertex $v$ and makes a 
single call to \texttt{indexOf$(v,B)$}, which involves scanning the boundary cycle $B$ once and hence costs $O(n)$ time. The subsequent partitioning of the boundary into three contiguous 
sets $P_2,P_3,P_4$ is carried out by advancing a pointer around the outer cycle, and therefore each vertex on $B$ is processed at most once, yielding another $O(n)$ contribution. Attaching vertices of each partition to the 
designated cardinal nodes incurs a constant number of edge insertions per vertex, also bounded by $O(n)$. Finally, the algorithm adds four edges to close the cycle $(N,E,S,W)$, which is constant time. Thus the total running time for 
constructing a four-completion with respect to a single pivot is bounded by 
$O(n)$, dominated by linear scans of the boundary and degree computations.

\section{Methodology II: Construction of Area-Universal Rectangular Layouts}\label{lowercase}

In Section~\ref{Highercase}, we presented an algorithm that constructs a unique \emph{extended} outerplanar graph when a vertex $v \in V(\mathcal{G})$ is adjacent to exactly three cardinal vertices in $E(\mathcal{G})$. However, when a degree-two vertex is adjacent to three cardinal vertices in $E(\mathcal{G})$, multiple distinct extended outerplanar graphs may result from the augmentation, and not all of these admit area-universal rectangular layouts.

In this section, we identify the specific augmentations that must be selected, in the case where a degree-two vertex is adjacent to three cardinal vertices, to obtain an \emph{extended} outerplanar graph that admits an area-universal rectangular layout regardless of the choice of regular edge labeling. As a consequence, we enumerate the total number of area-universal rectangular layouts that realize a given biconnected outerplanar graph.

\begin{theorem}
    Let $\mathcal{G}$ be a biconnected outerplanar graph with exactly two vertices of degree two, and let $E(\mathcal{G})$ denote its extended outerplanar graph. Let $d$ be a degree-two vertex of $\mathcal{G}$ that is adjacent, in $E(\mathcal{G})$, to exactly three cardinal vertices. Then $E(\mathcal{G})$ admits an area-universal rectangular layout if and only if at least one neighbor of $d$ is adjacent to exactly two cardinal vertices in $E(\mathcal{G})$.
\end{theorem}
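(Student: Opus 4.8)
The plan is to invoke Lemma~\ref{lemmaproof} and reduce the statement to the (non)existence of flippable edges, following the template of Theorems~\ref{necessary} and~\ref{sufficient}. First I would normalize the configuration: since $d$ is adjacent to three cardinals, assume without loss of generality these are $W,N,E$, so that $P_1=\{d\}$ and $d$'s two graph-neighbours are $a$, the first vertex of $P_2$ (adjacent to $E$), and $b$, the last vertex of $P_4$ (adjacent to $W$). Because $\deg_{\mathcal G}(d)=2$, the only interior face at $d$ is the ear-triangle $dab$, so $(a,b)$ is a chord. A useful structural remark is that having exactly two degree-$2$ vertices makes the weak dual a path (as in Theorem~\ref{t1}), so $\mathcal G$ is a path of triangles whose two ears are $d$ and the other degree-$2$ vertex $d'$; note $a,b$ themselves are not degree-$2$, since each is adjacent to $d$, to the other, and to a further boundary vertex. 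I would then translate the hypothesis into partition language: a neighbour of $d$ is adjacent to exactly two cardinals precisely when $P_2=\{a\}$ (so $a$ is the $E$–$S$ corner) or $P_4=\{b\}$ (so $b$ is the $S$–$W$ corner); otherwise $a$ and $b$ each touch exactly one cardinal.

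The key preliminary observation, which I would establish next, is that the three cardinal edges at $d$ force $(a,d),(b,d)\in T_1$ oriented \emph{into} $d$ in every REL: by the interior-vertex rule the edges to $N$, $E$, $W$ occupy the outgoing-$T_1$, outgoing-$T_2$ and incoming-$T_2$ slots, leaving $a,b$ in the incoming-$T_1$ slot. Consequently every four-cycle through $d$ carries two identically labelled edges at $d$ and cannot be alternating, so the chord $(a,b)$ is itself never flippable. This tells me the only possible flippable edge lives one triangle deeper, in the triangle $\Delta_2=abc$ sharing the edge $(a,b)$; and because the dual is a path, the apex $c$ is forced to be a boundary-neighbour $a'$ of $a$ (or $b'$ of $b$), with the terminal two-triangle case $c=d'$ handled directly.

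For necessity I would argue the contrapositive: assuming neither $a$ nor $b$ is a corner (so both touch exactly one cardinal and $P_2,P_4$ each have at least two vertices), I would show every REL contains a flippable edge. Taking $c=a'$ without loss of generality, the shared edge $(a',b)$ of $\Delta_2$ and the next triangle $\Delta_3$ becomes the candidate diagonal; I would compute the induced orientations along the four-cycle $a'\!-\!a\!-\!b\!-\!c_2$, where $c_2$ is the apex of $\Delta_3$, verify that it is alternating, and check that $(a',b)$ is incident to no degree-four vertex, exactly paralleling the propagation argument in Theorem~\ref{necessary}. The main obstacle I anticipate is the case bookkeeping here: one must handle $c=a'$ versus $c=b'$ and both orientations of the shared edge, and confirm that the alternation genuinely survives (rather than being broken by a further chord), which is the delicate heart of the proof.

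For sufficiency I would assume without loss of generality $P_2=\{a\}$, so that $a$ is the $E$–$S$ corner and carries the extra cardinal edge $(a,S)$. The claim is that this extra edge anchors the labels at $a$ just as the third neighbour on $P_3$ does in Theorem~\ref{sufficient}: the forced labels of $(a,E)\in T_2$ and $(a,S)\in T_1$ place two consecutive edges of any four-cycle meeting $a$ into opposing classes in a way that prevents alternation, so neither of the two admissible four-cycle shapes from Theorem~\ref{sufficient} (two-and-two, or one-and-three across paths) can alternate. I would then conclude via Lemma~\ref{lemmaproof} that every REL of $E(\mathcal G)$ yields an area-universal layout. The point requiring the most care is verifying that the corner adjacency blocks \emph{all} alternating four-cycles and not merely the nearest one; I would handle this by applying the $P_2$–$P_4$ non-adjacency analysis of Theorem~\ref{sufficient} to the part of $E(\mathcal G)$ lying on the far side of the chord $(a,b)$, which is itself a valid extended layout once $d$ is removed.
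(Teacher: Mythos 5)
Your proposal reproduces the paper's own strategy almost step for step: reduce everything to flippable edges via Lemma~\ref{lemmaproof}, prove necessity by exhibiting an alternating four-cycle whose diagonal is the chord one triangle below $d$, and prove sufficiency by showing that the forced boundary partition admits no alternating four-cycle. The sufficiency half is essentially the paper's argument and is sound in spirit (though deleting $d$ does not literally leave a valid extended graph---the cardinal $N$ becomes isolated---so that reduction needs rephrasing). The genuine problem is in the necessity half, at exactly the step you postpone as ``the delicate heart'': when the apex chain does not cross over to the opposite path, the alternation does not survive, and no bookkeeping will fix this, because that configuration is a counterexample to the statement itself. Concretely, let $\mathcal G$ be the fan with outer cycle $d,a_1,a_2,a_3,b$ and chords $(a_1,b),(a_2,b)$, whose two degree-two vertices are $d$ and $a_3$, and extend it by making $d$ adjacent to $W,N,E$, each of $a_1,a_2$ adjacent to $E$ only, $a_3$ adjacent to $E,S,W$, and $b$ adjacent to $W$ only. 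This is a valid extended graph (all interior faces are triangles, no separating triangle) in which neither neighbour of $d$ is adjacent to exactly two cardinals. Yet every label is forced: edges joining two vertices on a common path ($(d,a_1),(a_1,a_2),(a_2,a_3)$ on the $E$-path, $(a_3,b),(b,d)$ on the $W$-path) must lie in $T_1$, and at the degree-four vertices $a_1,a_2$ the one remaining edge, to $b$, must fill the incoming-$T_2$ block. In the unique REL so obtained, your candidate four-cycle $a_2\,a_1\,b\,a_3$ carries labels $T_1,T_2,T_1,T_1$, so it is not alternating, and one checks that no alternating four-cycle (hence no flippable edge) exists anywhere. Indeed the corresponding layout---$d$ a top strip, $a_3$ a bottom strip, $b$ at the left, $a_1,a_2$ stacked at the right---is one-sided, hence area-universal, contradicting the ``only if'' direction. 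Your terminal two-triangle case $c=d'$ fails the same way, so it cannot be ``handled directly.''

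You should also know that the paper's own proof has the same hole: it posits adjacent vertices $v_1\in P_3$, $v_2\in P_1$ with one of them in $N(w_1)\cap N(w_2)$ and asserts $(v_1,v_2)\in T_2$, which tacitly presumes $v_1$ does not also lie on $P_1$; in the fan above the only candidates are $v_1=a_3$, $v_2=a_2$, and $(a_3,a_2)$ is forced into $T_1$ precisely because $a_3$ lies on the $E$-path as well. The paper also quietly assumes $n>4$ in its proof, which the theorem statement does not. So, judged as a reconstruction, your proposal is faithful to the paper's argument; but the step you flagged as delicate is not deferrable bookkeeping---it is where the necessity direction fails as stated, and any correct version of the theorem would need an additional hypothesis excluding the configuration in which the other degree-two vertex absorbs the remaining three cardinal adjacencies.
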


\begin{proof}
    Let $\mathcal{G}$ be a biconnected outerplanar graph of order $n>4$ with exactly two vertices of degree two, and let $E(\mathcal{G})$ denote its extended outerplanar graph obtained by adding the four cardinal vertices $N, W, S,$ and $E$ on the outer face in clockwise order. Suppose that a degree-two vertex $d \in V(\mathcal{G})$ is adjacent, in $E(\mathcal{G})$, to exactly three cardinal vertices. Without loss of generality, assume that $d$ is adjacent to the cardinal vertices $E$, $N$, and $W$. Let $w_1$ and $w_2$ denote the two neighbors of $d$ in the graph $\mathcal{G}$.

    We first establish a necessary condition: if both neighbors of $d$ are adjacent to exactly one cardinal vertex, then multiple distinct \emph{extended} outerplanar graphs can be obtained from the augmentation (see Figure \ref{th2pr2}), and none of these admit an area-universal layout, regardless of the choice of regular edge labeling. We then prove that this condition is also sufficient. Specifically, if at least one neighbor of $d$ is adjacent to exactly two cardinal vertices, then the augmentation yields a unique extended outerplanar graph $E(\mathcal{G})$ that admits an area-universal rectangular layout.

    Since the augmentation used to construct $E(\mathcal{G})$ partitions the vertex set
$V(\mathcal{G})$ into four boundary paths $(P_0, \allowbreak P_1, P_2, P_3)$, assume without loss
of generality that the path $P_0$ consists only of the degree-two vertex $d$ (as $d$ is adjacent to three cardinal vertices). Suppose that both neighbors $w_1$ and $w_2$ of $d$ are adjacent to exactly one cardinal vertex (see Figure~\ref{degree2thm1}). In this case, each of $w_1$ and $w_2$ lies on exactly
one boundary path, with $w_1 \in P_3$ and $w_2 \in P_1$. Under this configuration, multiple distinct \emph{extended} outerplanar graphs can be
obtained from the augmentation, with the number of possible extensions depending on
$|V(\mathcal{G})|$. Let $v_1, v_2 \in V(\mathcal{G})$ be two adjacent vertices such that
either $v_1 \in N(w_1) \cap N(w_2)$ or $v_2 \in N(w_1) \cap N(w_2)$ (see Figures \ref{degree2thm2} and \ref{degree2thm2}). Since
$(w_1, w_2) \in T_2$, $(v_1, w_1) \in T_1$, $(v_2, w_2) \in T_1$, and
$(v_1, v_2) \in T_2$ (as $v_1 \in P_3$ and $v_2 \in P_1$), these edges together form an
alternating four-cycle $(w_1, w_2, v_2, v_1)$. If $v_1 \in N(w_1) \cap N(w_2)$, then the edge $(v_1, w_2)$ is flippable. Similarly, if
$v_2 \in N(w_1) \cap N(w_2)$, then the edge $(w_1, v_2)$ is flippable. Consequently,
every such augmentation introduces at least one flippable edge in the corresponding
regular edge labeling.

\begin{figure}
     \centering
     \begin{subfigure}[b]{0.25\textwidth}
         \centering
         \includegraphics[width=\textwidth]{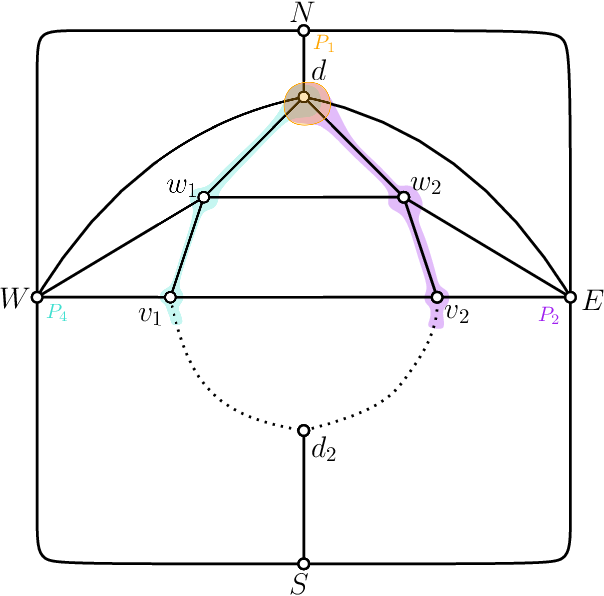}
         \caption{ }
         \label{degree2thm1}
     \end{subfigure}
     \hspace{1cm}
     \begin{subfigure}[b]{0.25\textwidth}
         \centering
         \includegraphics[width=\textwidth]{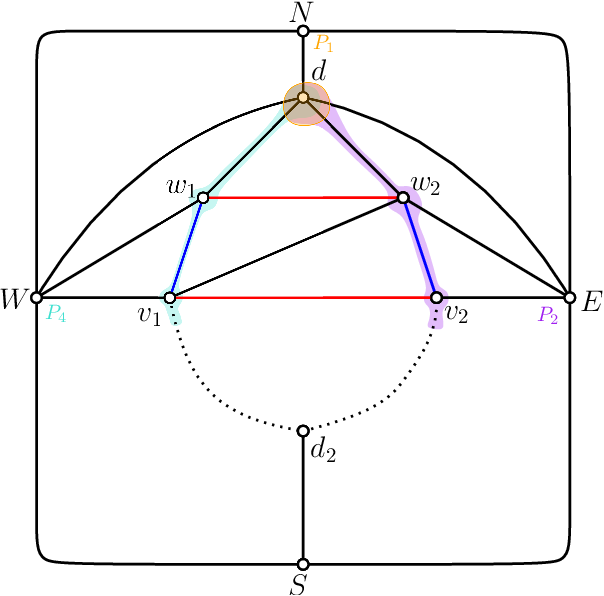}
         \caption{ }
         \label{degree2thm2}
     \end{subfigure}
     \hspace{1cm}
     \begin{subfigure}[b]{0.25\textwidth}
         \centering
         \includegraphics[width=\textwidth]{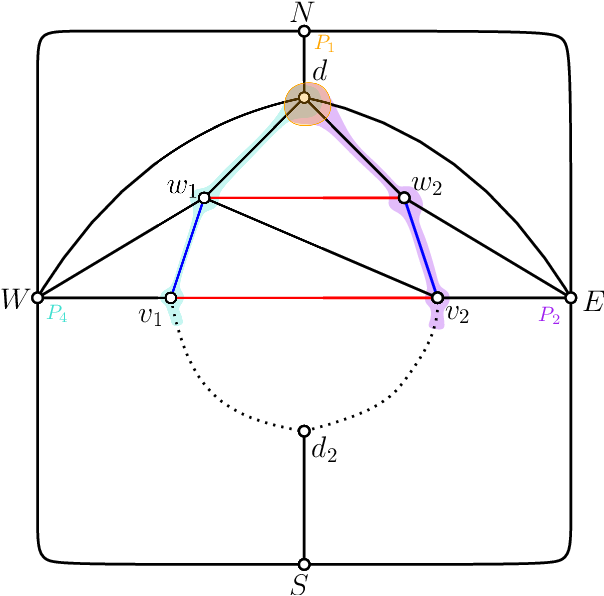}
         \caption{ }
         \label{degree2thm3}
     \end{subfigure}
     \caption{(a) Multiple \emph{extended} outerplanar graph possible when $|V(\mathcal{G})|>4$. (b,c) A flippable edge exist in the alternate four-cyle $(w_1-w_2-v_2-v_1).$}
\end{figure}

On the other hand, suppose that one of the neighbors of $d$ is adjacent to exactly two
cardinal vertices; without loss of generality, let $w_2$ be adjacent to the cardinal
vertices $N$ and $E$. In this case, the augmentation yields a unique \emph{extended}
outerplanar graph. The induced partition of the boundary vertices consists of the four
paths $P_0 = \{d\}, P_1 = \{d, w_2\}, P_2 = \{w_2, \ldots, d_2\},$ and $P_3 = \{d_2, \ldots, d\}.$ Let $C = (a_0,a_1,a_2,a_3)$ be any four-cycle in the augmented graph $E(\mathcal{G})$.
For $C$ to be an alternating four-cycle, there must exist indices
$k \in \{0,1,2,3\}$ and $i \in \{0,1,2,3\}$ (indices taken modulo $4$) such that
\[
\{a_i,a_{i+1}\} \subseteq P_k
\quad \text{and} \quad
\{a_{i+2},a_{i+3}\} \subseteq P_{k+2},
\]
with
\[
\{a_i,a_{i+1}\} \cap P_{k+2} = \emptyset
\quad \text{and} \quad
\{a_{i+2},a_{i+3}\} \cap P_k = \emptyset .
\]

Under the given partition of the boundary vertices
$(P_0,P_1,P_2,P_3)$, no four-cycle $C$ satisfies these conditions. Hence, $E(\mathcal{G})$ contains no alternating four-cycle. It follows that there does not exist a flippable edge. Therefore, the rectangular floor plan corresponding to $E(\mathcal{G})$ is area-universal. \hfill $\square$
\end{proof}

\begin{algorithm}[H]
\caption{degree-2 extension($\mathcal{G},B,d_1,d_2$)}
\label{algodegree-2}
\begin{algorithmic}[1]\small
\State \textbf{Input:} biconnected outerplanar graph $\mathcal{G}=(V,E)$,
outer-face cycle $B=(b_1,\dots,b_n)$ (clockwise),
degree-$2$ vertices $d_1,d_2$
\State \textbf{Output:} set $\mathcal{S}$ of augmented graphs

\State $\mathcal{S}\gets\emptyset$
\State add cardinal vertices $N,E,S,W$ in clockwise order

\ForAll{$d \in \{d_1,d_2\}$} \Comment{choose degree-2 vertex}
  \State $i_d \gets \text{indexOf}(d,B)$
  \State $w^{-} \gets B[(i_d-1)\bmod n]$ \Comment{neighbor before $d$}
  \State $w^{+} \gets B[(i_d+1)\bmod n]$ \Comment{neighbor after $d$}

  \ForAll{$w \in \{w^{-},w^{+}\}$} \Comment{choose neighbor with two cardinal adjacencies}
    \State $\mathcal{G}_{d,w}\leftarrow\mathcal{G}$
    \State addEdge$(d,W)$; addEdge$(d,N)$; addEdge$(d,E)$
    \If{$w = w^{+}$} \Comment{$w$ immediately after $d$}
      \State addEdge$(w,E)$; addEdge$(w,S)$
      \State $P_0\gets\{d\}$,\quad $P_1\gets\{d,w\}$, $P_2\gets \{w\}$, $P_3\gets\emptyset$
      \State $i\gets(\text{indexOf}(w,B)+1)\bmod n$
      \While{$B[i]\neq w^{-}$}
        \State append $B[i]$ to $P_2$
        \State $i\gets(i+1)\bmod n$
      \EndWhile
      \State append $B[i]$ to $P_2$ and $P_3$
      \State $i\gets(i+1)\bmod n$
      \While{$B[i]\neq d$}
        \State append $B[i]$ to $P_3$
        \State $i\gets(i+1)\bmod n$
      \EndWhile
      \State append $B[i]$ to $P_3$
  \Else \Comment{$w$ immediately before $d$}
      \State addEdge$(w,W)$; addEdge$(w,S)$
      \State $P_0\gets\{d\}$,\quad $P_3\gets\{d,w\}$, $P_2\gets \{w\}$, $P_1\gets\emptyset$
      \State $i\gets(\text{indexOf}(w,B)-1)\bmod n$
      \While{$B[i]\neq w^{-}$}
        \State append $B[i]$ to $P_2$
        \State $i\gets(i-1)\bmod n$
      \EndWhile
      \State append $B[i]$ to $P_2$ and $P_1$
      \State $i\gets(i-1)\bmod n$
      \While{$B[i]\neq d$}
        \State append $B[i]$ to $P_3$
        \State $i\gets(i-1)\bmod n$
      \EndWhile
      \State append $B[i]$ to $P_1$
    \EndIf

    \ForAll{$u\in P_1$} addEdge$(u,E)$ \EndFor
    \ForAll{$u\in P_2$} addEdge$(u,S)$ \EndFor
    \ForAll{$u\in P_3$} addEdge$(u,W)$ \EndFor

    \Comment{connect cardinal vertices}
    \State addEdge$(N,E)$; addEdge$(E,S)$
    \State addEdge$(S,W)$; addEdge$(W,N)$

    \State $\mathcal{S}\gets\mathcal{S}\cup\{\mathcal{G}_{d,w}\}$
  \EndFor
\EndFor

\State \Return $\mathcal{S}$
\end{algorithmic}
\end{algorithm}

Kant and He~\cite{kant1997regular} proved that for any \emph{extended} graph $E(\mathcal{G})$,
a regular edge labeling can be computed in linear time, and that the corresponding
rectangular floor plan defined by this labeling can also be constructed in linear time.
However, for the particular \emph{extended} outerplanar graph produced by
Algorithm~\ref{algodegree-2}, it is not necessary to compute a regular edge labeling.
Instead, we can directly construct the rectangular floorplan by placing the vertices
on an $n \times n$ grid using the Algorithm described below. The resulting layout is
always area-universal.

\begin{algorithm}[H]
\caption{AURFP($\mathcal G,d,w,P_1$)}
\label{algoAURFP}
\begin{algorithmic}[1]\small
\State \textbf{Input:} A biconnected outerplanar graph with designated degree-2 vertex and its neighbour and a valid partition of $V(\mathcal{G})$
\State \textbf{Output:} An area-universal layout
\State $L \gets [d, w]$
\State $\ell \gets d$, $\ell^{'} \gets w$ 
\State Let $u \in N_{\mathcal{G}}(\ell \cap \ell^{'})$, append $u$ to $L$
\State $\text{Visited} \gets \{d, w, u\}$

\While{$|\text{Visited}| < |V(\mathcal{G}|$}
    \State $A = N_{\mathcal{G}}(u,\ell) \setminus Visited$, $B = N_{\mathcal{G}}(u,\ell^{'}) \setminus Visited$
    \If{$A \neq \phi$}
    \State choose $v\in A$ and append $v$ to $L$
    \State Visited $\gets$ Visited $\cup \{v\}$
    \State $\ell^{'} \gets u$ and $u \gets v$
    \ElsIf{$B \neq \phi$}
    \State choose $v\in B$ and append $v$ to $L$
    \State Visited $\gets$ Visited $\cup \{v\}$
    \State $\ell \gets u$ and $u \gets v$
    \EndIf
\EndWhile
\State Draw a block and name it $L[0]$
\State Construct an $n \times n$ grid within the block
\For{$i=2$ to $|V(\mathcal{G})|$}
\If{$L[i-2]\in P_1$}
\State Slice the bottom-right block by a horizontal line segment at $i^{th}$ level; Name the new block $L[i-1]$
\Else
\State Slice the bottom-right block by a vertical line segment at $i^{th}$ level; Name the new block $L[i-1]$
\EndIf
\EndFor

\end{algorithmic}
\end{algorithm}

\begin{figure}
     \centering
     \begin{subfigure}[b]{0.35\textwidth}
         \centering
         \includegraphics[width=\textwidth]{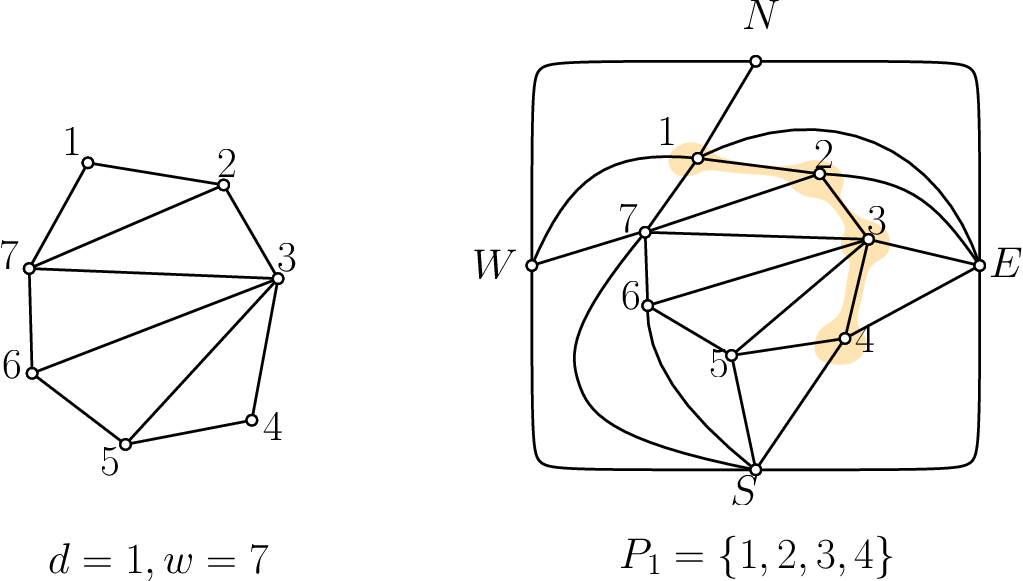}
         \caption{Augmentation of an outerplanar graph to obtain an extended outerplanar graph using Algorithm~\ref{algodegree-2} }
         \label{degree2algo1}
     \end{subfigure}
     \hspace{1cm}
     \begin{subfigure}[b]{0.5\textwidth}
         \centering
         \includegraphics[width=\textwidth]{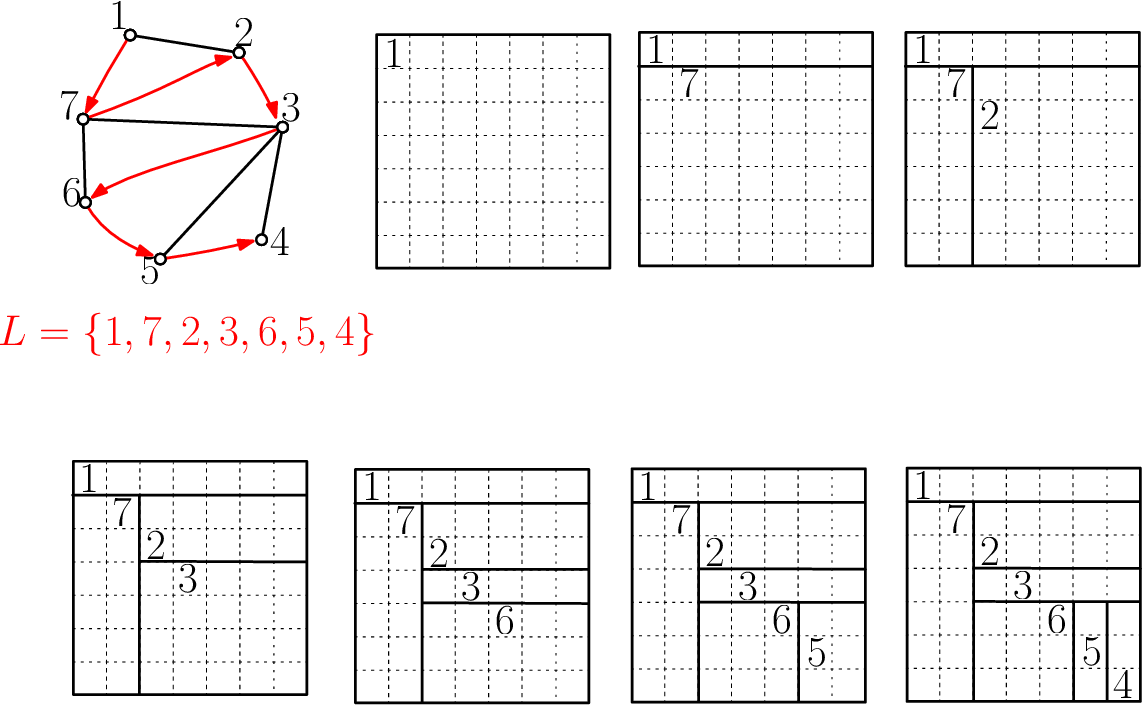}
         \caption{Generating an area-universal layout using Algorithm \ref{algoAURFP}. }
         \label{degree2algo2}
     \end{subfigure}
     \caption{Illustration of Algorithms \ref{algodegree-2} and \ref{algoAURFP}, showing the construction of an area-universal rectangular layout by fixing a degree-two vertex adjacent to three cardinal vertices.}
     \label{fig:degree2algo}
\end{figure}
Algorithm~\ref{algoAURFP} always generates an area-universal rectangular layout. At each
iteration $i$, a new block is created by inserting a maximal line segment that forms a
side of the rectangle corresponding to $L[i-2]$. Consequently, every maximal line
segment introduced by the algorithm is one-sided. Hence, the resulting rectangular layout is area-universal.

\section{Enumeration of Area-Universal Rectangular Layouts Realizing the Adjacency Structure of a Biconnected Outerplanar Graph}
In this paper, we characterize the class of \emph{extended} outerplanar graphs obtained by augmenting a biconnected outerplanar graph with exactly two vertices of degree two that admit area-universal rectangular layouts. In Section~\ref{necessarysection}, we show that if no vertex in the extended outerplanar graph is adjacent to exactly three cardinal vertices, then the corresponding rectangular layout is not area-universal. We then prove, in section \ref{Highercase}, that when a vertex is adjacent to exactly three cardinal vertices, this condition is sufficient to obtain an area-universal rectangular layout only if the chosen vertex has degree greater than two in the original graph; in this case, the augmentation yields a unique \emph{extended} outerplanar graph. In contrast, when the chosen vertex has degree two, the augmentation may result in multiple distinct \emph{extended} outerplanar graphs, not all of which admit area-universal rectangular layouts.

Thus, when a vertex of degree at least three is chosen and made adjacent to exactly three
cardinal vertices, the resulting \emph{extended} outerplanar graph is unique and admits a
unique regular edge labeling. Consequently, for each such chosen vertex, there exists
precisely one area-universal rectangular arrangement, up to rotation.

In Section~\ref{lowercase}, we show that when a degree-two vertex is adjacent to exactly
three cardinal vertices, the augmentation required to obtain an extended graph that
admits an area-universal rectangular layout depends on the relative positions of its
neighbors. In particular, if one of the neighbors of the degree-two vertex is adjacent
to exactly two cardinal vertices, then the resulting \emph{extended} outerplanar graph
is uniquely determined and admits an area-universal layout. For this case, exactly four
distinct extended outerplanar graphs admit area-universal rectangular layouts.

In conclusion, there are exactly $|V(\mathcal{G})|+2$ area-universal rectangular layouts
(up to rotation) that realize a given biconnected outerplanar graph.


\section{Conclusion and Future Work}

\noindent
In this work we have addressed the problem of constructing area--universal
layouts for biconnected outerplanar graphs. Our main contribution is the identification of
both necessary and sufficient conditions under which a biconnceded outerplanar graph can be augmented to obtain \emph{extended} outerplanar graph that admits area-universal layout. In particular, we established that while the presence of
exactly two degree--two vertices in \emph{proper} graph is a necessary requirement. However, not every augmentation of such a graph leads to an area-universal layout. To overcome this limitation, we characterized \emph{extended} outerplanar graphs based on the adjacencies of vertices of \emph{proper} graph and \emph{cardinal} vertices.  (see Figures \ref{fig:degree2algo} and \ref{front}).
\begin{figure}
    \centering
    \includegraphics[width=0.5\linewidth]{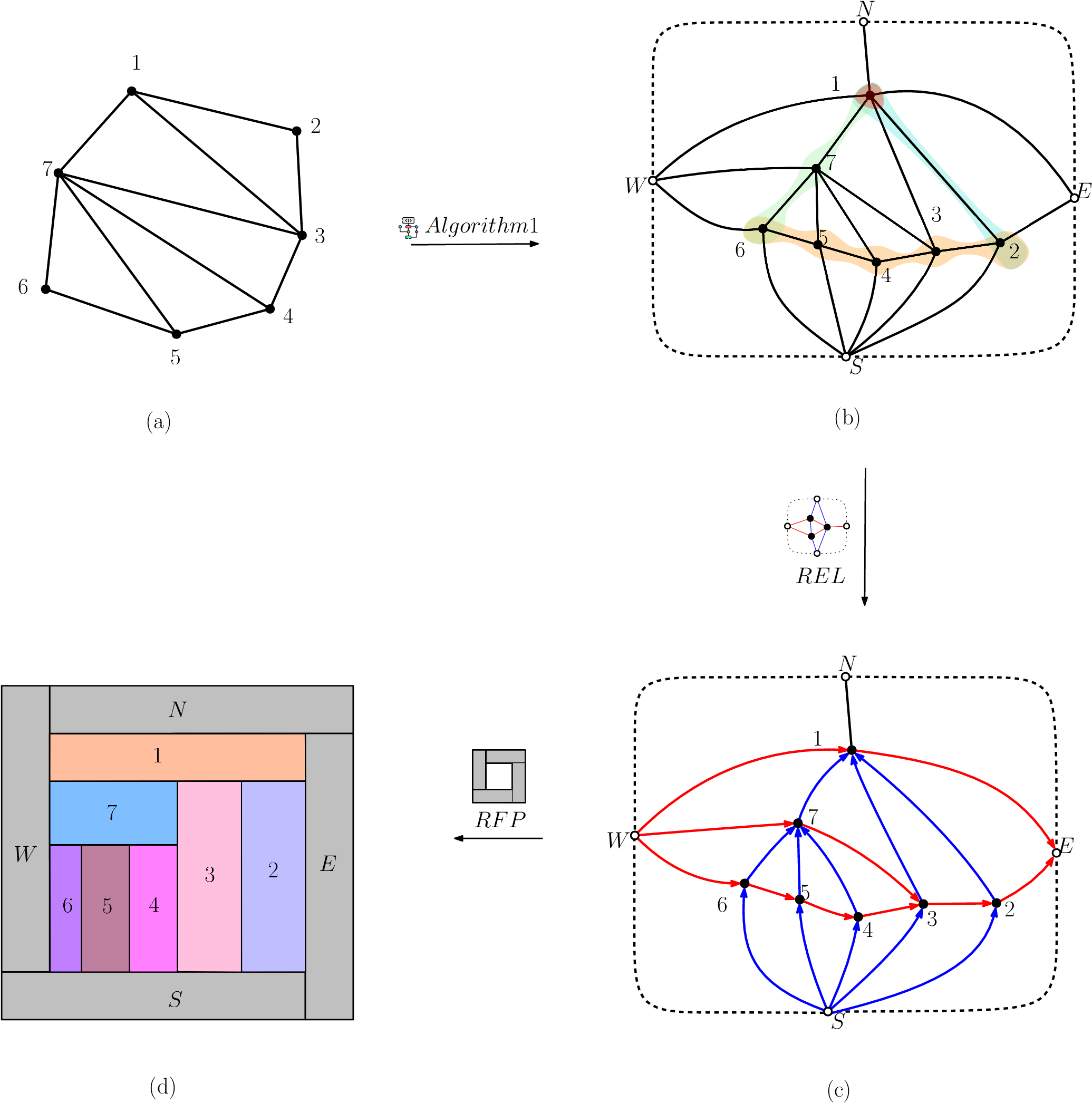}
    \caption{(a) An outerplanar graph with exactly two vertices of degree-2, (b) An extended outerplanar graph, with four contiguous paths separated with different colors, obtained from Algorithm \ref{algo2}, (c)  A REL of the extended outerplanar graph, (d) An area-universal layout corresponding to input outerplanar graph.}
    \label{front}
\end{figure}

The results presented here provide a structural foundation for constructing area--universal layouts in
outerplanar graphs. There are several potential avenues for future research based on this work. A significant direction would be to generalize the concept of area-universal layouts to a broader class of graphs. Specifically, identifying and characterizing the entire class of graphs for which an area-universal layout exists remains an open problem. Additionally, developing efficient algorithms to construct such layouts for these graphs is another challenging yet promising area for future exploration. Extending the approach to handle more complex graph structures and refining the algorithm to optimize layout properties such as compactness and symmetry can further enhance its applicability to real-world problems.

\bibliographystyle{splncs04}
\bibliography{mybib} 

\end{document}